\theoremstyle{definition}
\newtheorem{definition}{Definition}[section]
\definecolor{Gray}{gray}{0.9}
\newcolumntype{a}{>{\columncolor{white}}c}
\newcolumntype{b}{>{\columncolor{Gray}}c}
\begin{document}

\title{Assessing the query complexity limits of quantum phase estimation\\ using symmetry aware spectral bounds}

\newcommand{\BI}{\affiliation{
Quantum Lab, Boehringer Ingelheim, 55218 Ingelheim am Rhein, Germany}}

\newcommand{\BIMedChem}{\affiliation{
Boehringer Ingelheim Pharma GmbH \& Co KG, Birkendorfer Strasse 65, 88397 Biberach, Germany}}

\author{Cristian L. Cortes}
\affiliation{QC Ware Corporation, Palo Alto, California 94301, USA}

\author{Dario Rocca}
\affiliation{QC Ware Corporation, Palo Alto, California 94301, USA}

\author{J\'er\^ome F. Gonthier}
\affiliation{QC Ware Corporation, Palo Alto, California 94301, USA}

\author{Pauline J. Ollitrault}
\affiliation{QC Ware Corporation, Palo Alto, California 94301, USA}

\author{Robert M.~Parrish}
\affiliation{QC Ware Corporation, Palo Alto, California 94301, USA}

\author{Gian-Luca R.~Anselmetti}
\BI

\author{Matthias Degroote}
\BI

\author{Nikolaj Moll}
\BI

\author{Raffaele Santagati}
\BI

\author{Michael Streif}
\BI
\date{\today}

\begin{abstract}
The computational cost of quantum algorithms for physics and chemistry is closely linked to the spectrum of the Hamiltonian, a property that manifests in the necessary rescaling of its eigenvalues. The typical approach of using the 1-norm as an upper bound to the spectral norm to rescale the Hamiltonian suits the most general case of bounded Hermitian operators but neglects the influence of symmetries commonly found in chemical systems. In this work, we introduce a hierarchy of symmetry-aware spectral bounds that provide a unified understanding of the performance of quantum phase estimation algorithms using block-encoded electronic structure Hamiltonians. We present a variational and numerically tractable method for computing these bounds, based on orbital optimization, to demonstrate that the computed bounds are smaller than conventional spectral bounds for a variety of molecular benchmark systems. We also highlight the unique analytical and numerical scaling behavior of these bounds in the thermodynamic and complete basis set limits. Our work shows that there is room for improvement in reducing the 1-norm, not yet achieved through methods like double factorization and tensor hypercontraction, but highlights potential challenges in improving the performance of current quantum algorithms beyond small constant factors through 1-norm reduction techniques alone.  
\end{abstract}
\maketitle

\section{Introduction}
Quantum computing holds the potential to significantly impact physics and chemistry, addressing various problem areas such as quantum simulation, Gibbs state sampling, solving linear systems and partial differential equations, as well as precise energy and observable estimation \cite{low2017optimal,low2019hamiltonian,gilyen2019quantum,chan2023simulating,rall2020quantum,tong2021fast,steudtner2023fault,an2023quantum,krovi2023improved,chan2023simulating,Santagati2024}. Over the past few years, a wide range of quantum algorithms have emerged, each offering distinct trade-offs in runtime and resource cost \cite{low2017optimal,low2019hamiltonian,gilyen2019quantum,martyn2021grand,dutkiewicz2022heisenberg,lin2022heisenberg,dong2022ground,li2023adaptive,ding2023even}.%While significant progress has been made in reducing the total resource requirements of these algorithms, questions remain regarding their proximity to their runtime limits.

The computational cost of quantum algorithms is intricately linked to the spectrum of the Hamiltonian. Many algorithms require a rescaling of the Hamiltonian to ensure that its output eigenvalues are confined within certain operational bounds, impacting the precision needed for each algorithm and thus influencing the query complexity of Hamiltonian oracles. In quantum phase estimation and its modern variants \cite{lin2022heisenberg,dong2022ground,li2023adaptive,ding2023even}, this is crucial for mitigating aliasing and potential spectral leakage effects \cite{rendon2022effects,xiong2022dual}. The scaling also emerges naturally within linear combination of unitaries (LCU) or block-encoding methods to ensure the oracle remains unitary \cite{chakraborty2018power}. The need for rescaling often transcends specific algorithms and is required across various quantum computing paradigms.

For example, within the context of block-encoding and the LCU framework, qubitization-based QPE stands out as a leading fault-tolerant algorithm for estimating the ground-state energy \cite{low2019hamiltonian,babbush2018encoding, von2021quantum,lee2021even}. To achieve a desired precision $\epsilon$, excluding the initial state overlap dependence, the total number of Toffoli gates is known to follow the asymptotic scaling, $\mathcal{O}\!\left( \lceil\tfrac{\lambda}{\epsilon}\rceil C_{\mathcal{W}[H]} \right)$, where $\lambda$ is equal to the 1-norm required for block-encoding the Hamiltonian, $C_{\mathcal{W}[H]}$ denotes the total number of Toffoli gates needed to compile the qubitization walk operator $\mathcal{W}[H]$, and $\lceil\frac{\lambda}{\epsilon}\rceil$ specifies the number of query calls, which is rounded up to the nearest integer or power of two \cite{lee2021even}. This scaling dependence highlights that reducing both $\lambda$ and $C_{\mathcal{W}[H]}$ is necessary for improving the total gate complexity of this quantum algorithm.

In electronic structure theory, these concepts have led to the development of tensor factorization techniques, which aim to provide efficient representations of the Hamiltonian \cite{motta2021low, von2021quantum, lee2021even, cohn2021quantum, oumarou2022accelerating}. For context, these methods have a long history in the development of classical algorithms for quantum chemistry, where approaches like density fitting and Cholesky decompositions are now well established \cite{dunlap1979some,werner2003fast,pedersen2009density,hohenstein2010density}. Techniques like tensor hypercontraction (THC) have also been developed over the past fifteen years to provide further enhancements \cite{parrish2013exact,lee2019systematically}. In quantum computing, methods like double factorization, regularized compressed double factorization, and THC have been used to reduce the resource cost of quantum algorithms \cite{von2021quantum,motta2021low, rubin2022compressing,lee2021even,goings2022reliably,cohn2021quantum, oumarou2022accelerating}, by minimizing either the 1-norm $\lambda$, or qubitization oracle complexity, $C_{\mathcal{W}[H]}$. It is unclear, however, whether these techniques are close to optimal, or whether there remains ample room for improvement.

Our work contributes to understanding the fundamental limits of these tensor factorization techniques both theoretically and numerically.  Specifically, we study the query complexity limits arising from lower bounds on the 1-norm $\lambda$, deferring the examination of $C_{\mathcal{W}[H]}$ for future research. To this end, we build upon the findings of Loaiza \emph{et al.}~\cite{loaiza2023reducing}, who established a spectral bound for the 1-norm $\lambda$:
\begin{restatable}[General spectral bound]{theorem}{GeneralSpectralBound}
\label{thm:GeneralSpectralBound}
For a bounded Hermitian operator $\hat H$, all of its possible LCU decompositions have an associated 1-norm $\lambda$ which is lower bounded by half the spectral range, $\Delta/2 \equiv (E_{\rm max}-E_{\rm min})/2$, 
\begin{align}
    \tfrac{1}{2}\Delta \leq \lambda,
\end{align}
where $E_{\rm max}(E_{\rm min})$ is the highest(lowest) eigenvalue of $\hat H$.
\end{restatable}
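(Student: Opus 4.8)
The plan is to test the LCU identity directly on the two extremal eigenvectors of $\hat H$ and to exploit the single fact that every unitary has operator norm one. Recall that an LCU decomposition of $\hat H$ is an expansion $\hat H = \sum_j c_j U_j$ with each $U_j$ unitary, each $c_j \in \mathbb{C}$, and associated $1$-norm $\lambda = \sum_j |c_j|$. Let $\ket{\psi_{\rm max}}$ and $\ket{\psi_{\rm min}}$ be normalized eigenvectors of $\hat H$ belonging to $E_{\rm max}$ and $E_{\rm min}$.

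First I would write the spectral range as a difference of Rayleigh quotients and substitute the decomposition,
\begin{align}
\Delta = E_{\rm max}-E_{\rm min} = \braket{\psi_{\rm max}|\hat H|\psi_{\rm max}} - \braket{\psi_{\rm min}|\hat H|\psi_{\rm min}} = \sum_j c_j\Big(\braket{\psi_{\rm max}|U_j|\psi_{\rm max}} - \braket{\psi_{\rm min}|U_j|\psi_{\rm min}}\Big).
\end{align}
Since $\Delta \ge 0$ is real, I would next take the modulus of both sides and apply the triangle inequality, which reduces the claim to bounding each bracketed difference. For any unit vector $\ket{\phi}$ and any unitary $U_j$, the Cauchy--Schwarz inequality together with $\lVert U_j\rVert = 1$ gives $|\braket{\phi|U_j|\phi}| \le 1$, so each bracket has modulus at most $2$. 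Collecting terms then yields $\Delta \le 2\sum_j |c_j| = 2\lambda$, i.e.\ $\tfrac12\Delta \le \lambda$.

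As an independent sanity check, the same bound also follows more bluntly from the operator norm: the triangle inequality with $\lVert U_j\rVert = 1$ gives $\lambda = \sum_j |c_j| \ge \big\lVert \sum_j c_j U_j \big\rVert = \lVert \hat H\rVert = \max(|E_{\rm max}|,|E_{\rm min}|)$, and the elementary fact $\max(|a|,|b|) \ge |a-b|/2$ closes the argument. I would nonetheless present the eigenvector version as the primary proof, because it localizes the inequality on the two extremal eigenstates of $\hat H$ — exactly the structure that the rest of the paper sharpens, by restricting those eigenstates to a fixed symmetry sector.

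There is no genuinely hard step here. The only point that needs care is that the $c_j$ may be complex and the $U_j$ need not be Hermitian, so one must bound $|\braket{\phi|U_j|\phi}|$ itself rather than, say, its real part; the Cauchy--Schwarz estimate does this uniformly and term by term, so nothing about the internal structure of the decomposition is used (and the argument extends verbatim to any absolutely convergent LCU). The one conceptual ingredient is simply the recognition that probing the LCU identity with $\ket{\psi_{\rm max}}$ and $\ket{\psi_{\rm min}}$, instead of bounding a global norm, is what makes the bound both transparent and extensible.
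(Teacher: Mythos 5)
Your argument is correct. Note, however, that the paper itself does not prove Theorem~\ref{thm:GeneralSpectralBound}: it imports the statement from Loaiza \emph{et al.}~\cite{loaiza2023reducing}, so there is no in-text proof to compare against. Both of your routes are sound. The primary one — substituting $\hat H=\sum_j c_j U_j$ into the difference of Rayleigh quotients $\braket{\psi_{\rm max}|\hat H|\psi_{\rm max}}-\braket{\psi_{\rm min}|\hat H|\psi_{\rm min}}$ and bounding each $|\braket{\phi|U_j|\phi}|\le 1$ by Cauchy--Schwarz — is clean, term-by-term, and, as you observe, is exactly the form that generalizes to the symmetry-sector refinements later in the paper (Theorem~\ref{thm:SymmetrySpectralBound} is essentially this argument restricted to sector-$\mu$ eigenvectors). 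The operator-norm route $\lambda\ge\|\hat H\|=\max(|E_{\rm max}|,|E_{\rm min}|)\ge\Delta/2$ is also valid, but it is worth noting that it passes through the spectral norm, which the paper explicitly distinguishes from $\Delta/2$: the inequality $\max(|a|,|b|)\ge|a-b|/2$ is saturated only when the spectrum is symmetric about zero, which is precisely why $\Delta/2$ (invariant under identity shifts) rather than $\|\hat H\|$ is the right quantity — your first derivation makes that shift-invariance manifest, since the bracketed differences kill any $c_j U_j\propto\mathbb{1}$ term. The only caveat, which does not affect correctness at the paper's level of rigor, is that for a bounded Hermitian operator on an infinite-dimensional space the extremal spectral values need not be attained by eigenvectors; one would replace $\ket{\psi_{\rm max}},\ket{\psi_{\rm min}}$ by approximate eigenvectors and take a limit.
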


This Theorem highlights that the spectral range, $\Delta$, equal to the difference between the largest and smallest eigenvalues, acts as a robust lower bound to the 1-norm $\lambda$. The spectral range is invariant to the choice of unitary operators in the LCU decomposition, which distinguishes it from the spectral norm, $\|\hat{H}\|$, equal to the largest singular value of $\hat{H}$, which is sometimes used as a lower bound to the 1-norm. In particular scenarios, like the complete-square structure observed in double factorization \cite{von2021quantum, loaiza2023reducing}, the spectral norm can numerically overestimate the lowest achievable lower bound due to the presence of undesired identity terms that may present in the LCU decomposition.

Despite the significance of Theorem \ref{thm:GeneralSpectralBound} for bounded Hermitian operators, two fundamental concerns emerge. First, the Theorem does not take into account prior knowledge of potential symmetries in the Hermitian operator, $\hat{H}$. This omission suggests that the spectral bound, $\Delta/2$, might overestimate the spectral range for symmetry sectors relevant to many Hamiltonians and problem instances. Second, the theorem does not offer a practical method for computing such bounds numerically. Past approaches relied on exact diagonalization methods, effective only for small-scale electronic structure systems expanded with minimal basis sets \cite{loaiza2023reducing, loaiza2023block}. Consequently, evaluating these bounds for large-scale systems like FeMoco or P450, which have become standard benchmark systems for quantum computing algorithms, remains challenging.

In this work, we present a Corollary and additional Theorems that leverage the symmetries of the Hermitian operator $\hat{H}$ to establish symmetry-aware lower bounds to the 1-norm. We also propose a variational and numerically tractable approach for computing lower bounds of symmetry-sector spectral ranges, showing that  numerical estimates of symmetry-sector spectral ranges can be significantly smaller than conventional bounds for different problem instances relevant to electronic structure theory. In addition, our work analyzes the unique scaling behavior of these bounds in the thermodynamic and complete basis set limits for various molecular systems. The aim of this work is to provide a better understanding of the query complexity limits for different Hamiltonian oracles commonly used in physics and chemistry, highlighting potential room for improvement, as well as future challenges that may arise in tensor factorization and block-encoding methods.

The paper is structured as follows. In Section II, we summarize several preliminary results and definitions that will be used throughout the manuscript. In section III, we present the main theoretical results of the paper concerning the hierarchy of symmetry-aware spectral bounds, as well as theorems that establish a connection between these bounds and corresponding 1-norms of various symmetry-shifted Hamiltonians. In section IV, we present the computational procedure used for computing these bounds numerically. In section V, the numerical results are presented for large-scale benchmark systems, such as FeMoco and P450. Finally, we present numerical evidence for the scaling behavior of the symmetry-aware spectral bounds in the thermodynamic and complete basis set limits.

\section{Preliminary Definitions}
To encode the physical problem of interest, we define a Hamiltonian, $\hat{H}$, expressed as an LCU decomposition consisting of a sum of unitary operators $\hat{U}_k$ weighted by complex coefficients $c_k$,
\begin{equation}
    \hat{H} = \sum_k^L c_k \hat{U}_k,\;\;\text{where}\;\;  \lambda = \sum_k^L |c_k|,
    \label{LCU}
\end{equation}
is the 1-norm of these coefficients. Throughout this document, we also impose the requirement that the Hamiltonian $\hat{H}$ is invariant under the action of a set of symmetries, $\mathbb{S}=\{\hat{S}_m\}$, which collectively form a group $G$, satisfying the conditions $[\hat{S}_m,\hat{S}_n]=[\hat{S}_m,\hat{H}]=0,\forall\;m,n\in 1,\cdots,|\mathbb{S}|$. The group $G$ includes all individual group elements for each symmetry, as well as any combinations of these symmetries that leave the Hamiltonian invariant. Maschke's theorem enables a decomposition of the Hilbert space $\mathcal{H}$ into invariant subspaces, which are determined to be irreducible if they lack further invariant subspaces \cite{fulton2013representation,sagan2013symmetric}. Schur's lemma may then be used to reveal the block-diagonal structure of $\hat{H}$ as,
\begin{align}
    \hat{H}= \bigoplus_{\boldsymbol{m}} \bigoplus_{n=1}^{\mathrm{deg}\,\boldsymbol{m}}\hat{H}_n^{\boldsymbol{m}} \equiv \bigoplus_\mu\hat{H}_\mu,
\end{align}
where $\boldsymbol{m}$ is an index labeling the irreps of $G$ in the decomposition of $\mathcal{H}$, with multiplicity $\mathrm{deg}(\boldsymbol{m})$. Here, we adopt the condensed notation based on $\hat{H}_\mu$, representing a sub-block within a specific symmetry sector $\mu$, implicitly incorporating any multiplicities.

In the following paragraphs, we discuss the importance of the Hamiltonian spectrum, and in particular the spectral range which acts as a key parameter, affecting the performance of a broad range of quantum algorithms that employ either the Hamiltonian evolution (HE) or block-encoding (BE) oracles. Both of these Hamiltonian oracles play a key role in a wide variety of quantum algorithms with applications in preparing zero-temperature ground states, preparing finite temperature Gibbs states, estimating Green's functions and correlation functions, as well as energy and observable estimation.
\newline
\newline
\emph{Hamiltonian Evolution Oracle.} The first oracle we consider is the Hamiltonian evolution  model  defined by,
\begin{align}
    U_\mathrm{HE}  = e^{i\hat{H}}.
\end{align}
When using quantum phase estimation, this Hamiltonian oracle must be normalized so that the eigenvalue spectrum lies between 0 and $2\pi$ to avoid aliasing, specially relevant in the limit of arbitrary input states. This is achieved by using a modified Hamiltonian, $H' = c_1H + c_2\mathbb{1}$, where $c_1$ and $c_2$ are normalization constants. The choice of these constants depends on the Hamiltonian type and the input state used in the algorithm. The constant $c_1$ is crucial as it rescales the eigenvalue spectrum, influencing the required precision for the output phase. 
\newline 
\newline 
For general bounded Hermitian operators without symmetries, the ideal normalization constants are $c_1 = \frac{2\pi}{\Delta}$ and $c_2 = -\frac{2\pi}{\Delta}E_\mathrm{min}$, where $\Delta = E_\mathrm{max} - E_\mathrm{min}$ is the spectral range, and $E_\mathrm{min}/E_\mathrm{max}$ are the minimum/maximum eigenvalues of the Hamiltonian operator, $\hat{H}$. These normalization constants are suitable for arbitrary input states. Although these eigenvalues are not known in advance, it is possible to use approximate eigenvalues $\tilde{E}_\mathrm{min}/\tilde{E}_\mathrm{max}$ satisfying the conditions, $E_\mathrm{min} \leq \tilde{E}_\mathrm{min}$ and $E_\mathrm{max} \leq \tilde{E}_\mathrm{max}$. As an example, the methods presented in Appendix A and Section IV may be used to provide numerically tractable estimates of $\tilde{E}_\mathrm{min}/\tilde{E}_\mathrm{max}$.
\newline 
\newline 
On the other hand, if the Hamiltonian preserves a specific symmetry transformation, different rescaling parameters should be used. Assuming the input state to the QPE algorithm belongs to the symmetry sector $\mu$, the normalization constants should ideally become equal to $c_1 = \frac{2\pi}{\Delta_\mu}$ and $c_2 = -\frac{2\pi}{\Delta_\mu}E^\mu_\mathrm{min}$, where $\Delta_\mu = E^\mu_\mathrm{max} - E^\mu_\mathrm{min}$, and $E^\mu_\mathrm{min}/E^\mu_\mathrm{max}$ are the minimum/maximum eigenvalues of the Hamiltonian operator $\hat{H}$ in symmetry sector $\mu$ ($\hat{H}_{\mu}$). The rescaling constants are suitable for arbitrary input states belonging to symmetry sector $\mu$. As above, the methods presented in Appendix A and Section IV may be used to provide numerically tractable estimates of $\tilde{E}^\mu_\mathrm{min}/\tilde{E}^\mu_\mathrm{max}$ belonging to a given symmetry sector $\mu$. 
\newline
\newline
\emph{Block-Encoding Oracle.} The second oracle we consider is the block-encoding oracle model defined by,
\begin{align}
    U_{\mathrm{BE}} = 
    \begin{pmatrix}
        \hat{H}/\lambda & * \\ 
        * & *
    \end{pmatrix}.
\end{align}
Block-encoding oracles have similar applications to HE oracles since these two input models are in fact equivalent if we ignore the overhead that comes from converting between them. Within the context of ground-state energy estimation, block-encoding is often used as a building block of the qubitization walk operator, defined as $\mathcal{W} = \exp\left(\pm i\cos^{-1}(\hat{H}/\lambda) \right)$. Since the output phase, $\varphi = \pm\cos^{-1}(E/\lambda)$, is limited between $0$ and $\pi$, aliasing is not a fundamental concern in this context. The primary consideration is ensuring that $\mathcal{W}$ remains unitary for any input state. This implies the Hamiltonian oracle must have an eigenvalue spectrum bounded between -1 and 1.
\newline 
\newline 
For general bounded Hermitian operators without any symmetries, the normalization constant $\lambda$ is equal to the 1-norm in Eqn \eqref{LCU}. Since the 1-norm is highly dependent on the specifics of the Hamiltonian LCU expansion, including orbital optimization and tensor compression methods \cite{ koridon2021orbital, von2021quantum,lee2021even}, it is possible to provide a rough lower bound using the spectral range, $\frac{\Delta}{2} \leq \|\hat{H}\|$, as highlighted in Theorem 1 \cite{loaiza2023reducing}. This implies that the rescaling for general Hamiltonians in the block-encoding oracle model is also limited by the spectral range for arbitrary input states.
\newline 
\newline 
The situation changes when Hamiltonian preserves set of symmetries $\mathbb{S}=\{\hat{S}_m\}$. As highlighted in \cite{loaiza2023reducing,loaiza2023block}, it is possible to define two new Hamiltonians, $H_S$, and $H_{BI}$, which themselves have different eigenvalue spectra from $\hat{H}$.

\begin{definition}[Symmetry-shifted Hamiltonian]
The symmetry-shifted Hamiltonian is defined as, 
\begin{align}
    \hat{H}_\mathrm{S} = \hat{H} - \hat{f}(\mathbb{S}),
\end{align}
where $\hat{f}(\mathbb{S})$ is a well-behaved function of the Hamiltonian's symmetries, obeying $[\hat{H},\hat{f}(\mathbb{S})]=0$. In a given symmetry sector $\mu$, the symmetry function $\hat{f}(\mathbb{S})$ will obey the eigenvalue relation, $\hat{f}(\mathbb{S})|\psi_\mu\rangle=f_\mu|\psi_\mu\rangle$.
\end{definition}
\begin{definition}[Block-invariant Hamiltonian]
The block-invariant Hamiltonian is defined as, 
\begin{align}
    \hat{H}_\mathrm{BI} = \hat{H}_\mathrm{S} -\hat{B}(\hat{f}(\mathbb{S})-f_\mu) ,
    \label{block_invariant_Hamiltonian}
\end{align}
where $\hat{B}$ is a Hermitian operator that obeys the same symmetries as the Hamiltonian, $[\hat{B},\hat{f}(\mathbb{S})]=0$, but it is assumed that it does not generally commute with the Hamiltonian. The operator $\hat{f}(\mathbb{S})$ is another well-behaved function of the Hamiltonian's symmetries, $[\hat{H},\hat{f}(\mathbb{S})]=0$, obeying the eigenvalue relation, $\hat{f}(\mathbb{S})|\psi_\mu\rangle=f_\mu|\psi_\mu\rangle$. 
\end{definition}

While the symmetry-shifted Hamiltonian $\hat{H}_S$ preserves the eigenvalue spectrum for all possible symmetry-preserving input states apart from symmetry-sector-dependent identity shifts, the block-invariant Hamiltonian $\hat{H}_\mathrm{BI}$ preserves the correct eigenvalue spectrum for the symmetry sector $\mu$ only. As a result, it is possible to show that the symmetry sector bound acts as the true lower bound, $ \frac{\Delta_\mu}{2} \leq \lambda_\mathrm{BI}$, for arbitrary input states. The  details of this inequality are presented explicitly in the section below.

\section{Symmetry-aware spectral bounds} 

To take into account the effect of symmetries, we use the results from the previous section and explicitly write the Hamiltonian $\hat{H}$ in the block-diagonal basis, $\hat{H}=\bigoplus_\mu\hat{H}_\mu,$ where $\hat{H}_\mu$ is a sub-block defined with respect to a particular symmetry sector $\mu$. This leads to the following corollary, which describes the hierarchy of inequalities:

\begin{restatable}[Symmetry-aware spectral bounds]{corollary}{SymmetryBound}
\label{thm:SymmetryBound}
For a bounded block-diagonal Hermitian operator, $\hat{H}=\bigoplus_\mu \hat{H}_\mu$, the total spectral bound ($\Delta/2$) will obey the following hierarchy of lower bounds,
\begin{align}
    \tfrac{1}{2}\Delta_{\mu} \leq \tfrac{1}{2}\Delta_{\mathrm{s}} \leq \tfrac{1}{2}\Delta \, ,
\end{align}
where $\Delta_{\mu} = E_\mathrm{max}^\mu - E_\mathrm{min}^\mu$ represents the spectral range within a specific symmetry sector $\mu$, $\Delta_{\mathrm{s}} = \max_\mu (E_\mathrm{max}^\mu - E_\mathrm{min}^\mu)$ denotes the largest spectral range across all sectors, and $\Delta = \max_\mu (E_\mathrm{max}^\mu) - \min_\mu (E^\mu_\mathrm{min})$ encapsulates the full spectral range of $\hat{H}$. Here, $E^\mu_\mathrm{max}$($E^\mu_\mathrm{min}$) is the highest(lowest) eigenvalue within the symmetry sector $\mu$.
\end{restatable}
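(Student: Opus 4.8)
\textit{Proof proposal.} The plan is to treat the two inequalities of the chain separately, since each reduces to an elementary property of maxima and minima taken over the (finite) collection of symmetry sectors $\{\mu\}$ into which $\hat H$ block-decomposes; the "spectral bound" interpretation then follows by invoking Theorem~\ref{thm:GeneralSpectralBound}. For the first inequality $\tfrac12\Delta_{\mu}\le\tfrac12\Delta_{\mathrm{s}}$, I would simply note that by definition $\Delta_{\mathrm{s}}=\max_\nu\Delta_\nu=\max_\nu\bigl(E^\nu_{\mathrm{max}}-E^\nu_{\mathrm{min}}\bigr)$, so $\Delta_\mu\le\Delta_{\mathrm{s}}$ holds for every sector $\mu$ because an individual term of a maximum never exceeds the maximum itself; multiplying by $\tfrac12$ gives the claim.

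For the second inequality $\tfrac12\Delta_{\mathrm{s}}\le\tfrac12\Delta$, I would let $\mu^\star$ denote a sector that attains the maximum defining $\Delta_{\mathrm{s}}$, so that $\Delta_{\mathrm{s}}=E^{\mu^\star}_{\mathrm{max}}-E^{\mu^\star}_{\mathrm{min}}$. The key point to handle carefully is that $\mu^\star$ need not be the sector containing the global largest eigenvalue of $\hat H$, nor the sector containing the global smallest eigenvalue; consequently the two endpoints $E^{\mu^\star}_{\mathrm{max}}$ and $E^{\mu^\star}_{\mathrm{min}}$ must each be bounded against the global extremal quantities independently. From $E^{\mu^\star}_{\mathrm{max}}\le\max_\mu E^\mu_{\mathrm{max}}$ and $E^{\mu^\star}_{\mathrm{min}}\ge\min_\mu E^\mu_{\mathrm{min}}$ (hence $-E^{\mu^\star}_{\mathrm{min}}\le-\min_\mu E^\mu_{\mathrm{min}}$), adding the two bounds yields $\Delta_{\mathrm{s}}=E^{\mu^\star}_{\mathrm{max}}-E^{\mu^\star}_{\mathrm{min}}\le\max_\mu E^\mu_{\mathrm{max}}-\min_\mu E^\mu_{\mathrm{min}}=\Delta$, and dividing by $2$ completes the chain.

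To justify that the members of this chain are genuinely \emph{bounds} on the 1-norm, I would finally apply Theorem~\ref{thm:GeneralSpectralBound} to $\hat H$ itself: since $\hat H$ is a bounded Hermitian operator with spectral range $\Delta$, every LCU decomposition obeys $\lambda\ge\Delta/2$, so the hierarchy extends to $\tfrac12\Delta_{\mu}\le\tfrac12\Delta_{\mathrm{s}}\le\tfrac12\Delta\le\lambda$. I expect the only real obstacle here to be the bookkeeping in the second step: the tempting shortcut of identifying $\mu^\star$ with the sector that carries the extreme eigenvalues of $\hat H$ is incorrect in general, and one must instead bound the in-sector maximum and in-sector minimum of $\mu^\star$ separately against the corresponding sector-wise extrema. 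The sharper sector-resolved statements (e.g.\ $\tfrac12\Delta_{\mathrm{s}}\le\lambda_{\mathrm S}$ and $\tfrac12\Delta_{\mu}\le\lambda_{\mathrm{BI}}$), which require the symmetry-shifted and block-invariant Hamiltonians of Definitions~1 and~2, I would defer to the theorems that follow rather than fold into this corollary.
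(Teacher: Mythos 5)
Your proposal is correct and follows essentially the same elementary max/min argument as the paper: the first inequality is the trivial term-versus-maximum bound, and the second is obtained by bounding the endpoints of the maximizing sector against the global sector-wise extrema (the paper phrases this equivalently by substituting $\theta=\min_\mu E^\mu_{\mathrm{min}}\le E^\mu_{\mathrm{min}}$ inside the max). Your closing appeal to Theorem~\ref{thm:GeneralSpectralBound} and the deferral of the $\lambda_{\mathrm S}$, $\lambda_{\mathrm{BI}}$ statements to the later theorems likewise matches the paper's organization.
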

\begin{proof}
    Note that the original spectral bound can be expressed as:
    \begin{equation}
        \frac{\Delta}{2} = \frac{1}{2}\left( \max_\mu E_\mathrm{max}^\mu - \min_\mu E^\mu_\mathrm{min} \right).
    \end{equation}
    Based on this observation, the hierarchy of inequalities follows from the properties of max and min functions. For completeness, we demonstrate how this works by first writing, 
    \begin{align}
         \Delta &=  \max_\mu E_\mathrm{max}^\mu - \min_\mu E^\mu_\mathrm{min} \\
                &=  \max_\mu E_\mathrm{max}^\mu - \theta \\
                &= \max_\mu (E_\mathrm{max}^\mu - \theta )                 
    \end{align}
     where $\theta = \min_\mu E_{\mathrm{min}}^\mu$. By definition, $\theta \leq E^\mu_\mathrm{min}$ for a particular symmetry sector $\mu$. As a result, we obtain the following inequality,
    \begin{align}
        \Delta_s \equiv \max_\mu (E_\mathrm{max}^\mu - E_\mathrm{min}^\mu) \leq \Delta.
    \end{align}
    The last inequality, $(E_\mathrm{max}^\mu - E_\mathrm{min}^\mu) \leq \max_\mu (E_\mathrm{max}^\mu - E_\mathrm{min}^\mu)$, naturally follows. This creates the desired hierarchy of bounds and therefore completes the proof. 
\end{proof}
While this Corollary proves that a hierarchy of spectral bounds exists, it does not prove that a symmetry-shifted $\hat{H}_\mathrm{S}$ or block-invariant $\hat{H}_\mathrm{BI}$ Hamiltonian must have corresponding 1-norms that are restricted by either of these two bounds. The following two theorems prove that these lower bounds apply directly to symmetry-shifted and block-invariant Hamiltonians introduced in refs. \cite{loaiza2023reducing,loaiza2023block}.

\begin{restatable}[Symmetry-shifted 1-norm bound]{theorem}{SymmetrySpectralBound}
\label{thm:SymmetrySpectralBound}
For a bounded symmetry-shifted Hermitian operator, $\hat H_\mathrm{S}$, all of its possible LCU decompositions will have an associated 1-norm $\lambda_\mathrm{S}$ which is lower bounded by the spectral symmetry bound, $\Delta_s/2 \equiv \tfrac{1}{2}\max_\mu (E_\mathrm{max}^\mu - E_\mathrm{min}^\mu)$, 
\begin{align}
    \tfrac{1}{2}\Delta_s \leq \lambda_\mathrm{S}.
\end{align}
\end{restatable}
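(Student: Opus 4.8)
The plan is to reduce the statement to the general spectral bound of Theorem~\ref{thm:GeneralSpectralBound} applied directly to the operator $\hat{H}_\mathrm{S}$. Since Theorem~\ref{thm:GeneralSpectralBound} already guarantees $\tfrac{1}{2}(E_{\max}-E_{\min}) \le \lambda$ for \emph{any} bounded Hermitian operator and \emph{any} of its LCU decompositions, it suffices to show that the spectral range of $\hat{H}_\mathrm{S}$ is at least $\Delta_s$.

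First I would invoke the defining property of the symmetry function: $\hat{f}(\mathbb{S})$ commutes with $\hat{H}$ and acts as the scalar $f_\mu$ on the symmetry sector $\mu$. Consequently $\hat{H}_\mathrm{S}=\hat{H}-\hat{f}(\mathbb{S})$ is block-diagonal in the same sector decomposition as $\hat{H}$, and within sector $\mu$ its eigenvalues are precisely $\{E^\mu - f_\mu\}$, i.e.\ the spectrum of $\hat{H}_\mu$ rigidly shifted by $-f_\mu$. In particular the extreme eigenvalues of $\hat{H}_\mathrm{S}$ restricted to sector $\mu$ are $E^\mu_{\max}-f_\mu$ and $E^\mu_{\min}-f_\mu$, so the within-sector spectral range $\Delta_\mu=E^\mu_{\max}-E^\mu_{\min}$ is left unchanged by the shift.

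Next I would bound the global spectral range of $\hat{H}_\mathrm{S}$ from below. Writing $E_{\max}(\hat{H}_\mathrm{S})=\max_\mu(E^\mu_{\max}-f_\mu)$ and $E_{\min}(\hat{H}_\mathrm{S})=\min_\mu(E^\mu_{\min}-f_\mu)$, for every fixed sector $\mu_0$ one has $E_{\max}(\hat{H}_\mathrm{S})\ge E^{\mu_0}_{\max}-f_{\mu_0}$ and $E_{\min}(\hat{H}_\mathrm{S})\le E^{\mu_0}_{\min}-f_{\mu_0}$; subtracting, the $f_{\mu_0}$ cancels and gives $E_{\max}(\hat{H}_\mathrm{S})-E_{\min}(\hat{H}_\mathrm{S})\ge \Delta_{\mu_0}$. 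Maximizing over $\mu_0$ shows the spectral range of $\hat{H}_\mathrm{S}$ is at least $\Delta_s=\max_\mu\Delta_\mu$. Applying Theorem~\ref{thm:GeneralSpectralBound} to $\hat{H}_\mathrm{S}$ then yields $\lambda_\mathrm{S}\ge \tfrac{1}{2}\bigl(E_{\max}(\hat{H}_\mathrm{S})-E_{\min}(\hat{H}_\mathrm{S})\bigr)\ge \tfrac{1}{2}\Delta_s$, as claimed.

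The only delicate point — and the step I would state most carefully — is the claim that $\hat{f}(\mathbb{S})$ acts as a single scalar $f_\mu$ on each sector $\mu$ of the decomposition used in Corollary~\ref{thm:SymmetryBound}, including the multiplicity copies; this is essentially forced by the construction of $\hat{H}_\mathrm{S}$ (it is built from functions of the conserved symmetries, hence constant on each joint eigenspace by Schur's lemma), but it deserves explicit mention because the whole argument hinges on the shift being sector-wise constant rather than merely commuting with $\hat{H}$. Everything else is elementary manipulation of the $\max$/$\min$ functions, mirroring the proof of Corollary~\ref{thm:SymmetryBound}.
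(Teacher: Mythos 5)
Your proposal is correct and follows essentially the same route as the paper: both apply Theorem~\ref{thm:GeneralSpectralBound} to $\hat{H}_\mathrm{S}$ and then show that its global spectral range, $\max_\mu(E^\mu_{\max}-f_\mu)-\min_\mu(E^\mu_{\min}-f_\mu)$, is at least $\Delta_s$. The only cosmetic difference is that you establish the key inequality by fixing a sector $\mu_0$ and letting the $f_{\mu_0}$ cancel, whereas the paper invokes subadditivity of the max function; these are the same elementary fact.
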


\begin{proof}
    Assuming that $E_\mathrm{max}^{\mu}/E_\mathrm{min}^\mu$ are the extremal eigenvalues of $\hat{H}$ within each symmetry sector $\mu$, the corresponding extremal eigenvalues of $\hat{H}_\mathrm{S}$ will be $(E_\mathrm{max}^{\mu}-f_\mu)/(E_\mathrm{min}^\mu-f_\mu)$ respectively. Using Theorem 1, the regular spectral bound for $\hat{H}_\mathrm{S}$ may be written as:
    \begin{align}
        \frac{1}{2}\left( \max_\mu (E_\mathrm{max}^\mu-f_\mu) - \min_\mu (E^\mu_\mathrm{min} - f_\mu) \right) \leq \lambda_\mathrm{S}.
    \end{align}
    We now wish to show that the quantity on the left hand side is strictly lower bounded by $\Delta_s/2$. This is obtained by writing,
    \begin{align}
        \frac{1}{2}\left( \max_\mu (E_\mathrm{max}^\mu-f_\mu) - \min_\mu (E^\mu_\mathrm{min} - f_\mu) \right) &= \frac{1}{2}\left( \max_\mu (E_\mathrm{max}^\mu-f_\mu) + \max_\mu ( f_\mu - E^\mu_\mathrm{min}) \right) \\
        &\geq \frac{1}{2}\max_\mu \left(  E_\mathrm{max}^\mu-f_\mu + f_\mu - E^\mu_\mathrm{min} \right). 
    \end{align}
    In the first line, we used the relation between max and min functions, $\max(-G)=-\min(G)$, and in the second line we used the triangle inequality, $\max(G_1+G_2)\leq \max(G_1)+\max(G_2)$. Since the quantity in the last line is exactly equal to $\Delta_s/2$, the proof is complete.
\end{proof}

\begin{restatable}[Block-invariant 1-norm bound]{theorem}{SymmetrySectorBound}
\label{thm:SymmetrySectorBound}
Consider the block-invariant Hermitian operator, $\hat H_\mathrm{BI}$, defined in \eqref{block_invariant_Hamiltonian} with an optimal symmetry-shifted Hamiltonian $\hat{H}_\mathrm{S}$. All of its possible LCU decompositions will have an associated 1-norm $\lambda_\mathrm{BI}$ which is lower bounded by the symmetry sector bound, $\Delta_\mu/2 \equiv \tfrac{1}{2}(E_\mathrm{max}^\mu - E_\mathrm{min}^\mu)$, 
\begin{align}
    \tfrac{1}{2}\Delta_\mu \leq \lambda_\mathrm{BI}.
\end{align}
\end{restatable}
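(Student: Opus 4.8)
The plan is to reduce the claim to the general spectral bound of Theorem~\ref{thm:GeneralSpectralBound}, applied this time to $\hat{H}_\mathrm{BI}$ itself: if I can show that the spectral range of $\hat{H}_\mathrm{BI}$ is at least $\Delta_\mu$, then Theorem~\ref{thm:GeneralSpectralBound} immediately gives $\tfrac12\Delta_\mu \le \tfrac12\Delta(\hat{H}_\mathrm{BI}) \le \lambda_\mathrm{BI}$ for every LCU decomposition of $\hat{H}_\mathrm{BI}$. The structural fact that makes this work is that, although $\hat{B}$ is not assumed to commute with $\hat{H}$, the added term $\hat{B}(\hat{f}(\mathbb{S})-f_\mu)$ in Eq.~\eqref{block_invariant_Hamiltonian} is annihilated precisely on the symmetry sector $\mu$ — and, more generally, on the entire eigenspace $\mathcal{H}_{f_\mu}$ of $\hat{f}(\mathbb{S})$ with eigenvalue $f_\mu$ — because $\hat{f}(\mathbb{S})-f_\mu$ vanishes there.

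First I would check that $\mathcal{H}_{f_\mu}$ is an invariant subspace of $\hat{H}_\mathrm{BI}$. Since the symmetry generators commute with one another and with $\hat{H}$, every function of them commutes with $\hat{H}_\mathrm{S}$; combined with $[\hat{B},\hat{f}(\mathbb{S})]=0$ this gives $[\hat{H}_\mathrm{BI},\hat{f}(\mathbb{S})]=0$, so each eigenspace of $\hat{f}(\mathbb{S})$ is $\hat{H}_\mathrm{BI}$-invariant. Restricting to $\mathcal{H}_{f_\mu}$, the operator $\hat{f}(\mathbb{S})-f_\mu$ is identically zero, hence $\hat{H}_\mathrm{BI}|_{\mathcal{H}_{f_\mu}} = \hat{H}_\mathrm{S}|_{\mathcal{H}_{f_\mu}}$ — the $\hat{B}$ term has disappeared entirely. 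On this subspace $\hat{H}_\mathrm{BI}$ therefore again commutes with the full symmetry group and splits block-diagonally across the sectors contained in $\mathcal{H}_{f_\mu}$; in particular the sector $\mu$ is itself $\hat{H}_\mathrm{BI}$-invariant, and on it $\hat{H}_\mathrm{BI}$ acts as $\hat{H}_\mu$ minus the (constant) eigenvalue that the symmetry function defining $\hat{H}_\mathrm{S}$ takes on sector $\mu$. Consequently $E_\mathrm{max}^\mu$ and $E_\mathrm{min}^\mu$, each shifted by that same constant, are eigenvalues of $\hat{H}_\mathrm{BI}$, so $\Delta(\hat{H}_\mathrm{BI}) \ge (E_\mathrm{max}^\mu - E_\mathrm{min}^\mu) = \Delta_\mu$, the shift cancelling. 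Applying Theorem~\ref{thm:GeneralSpectralBound} to the bounded Hermitian operator $\hat{H}_\mathrm{BI}$ and chaining the inequalities completes the argument.

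I expect the main obstacle to be the bookkeeping in the invariant-subspace step: making precise that the non-commuting operator $\hat{B}$ is genuinely inert on $\mathcal{H}_{f_\mu}$, and that the sector $\mu$ persists as an $\hat{H}_\mathrm{BI}$-invariant subspace even when $\hat{f}(\mathbb{S})$ does not by itself resolve every sector. Two side remarks I would include: the argument in fact holds for any admissible $\hat{H}_\mathrm{S}$ of the form in Eq.~\eqref{block_invariant_Hamiltonian}, the "optimal" choice being merely the one of practical interest because it is what tightens the competing upper bound on $\lambda_\mathrm{BI}$; and a symmetry shift only translates each sector rigidly, so $\Delta_\mu$ is independent of which symmetry function is used, which is why the constant shift drops out cleanly at the end.
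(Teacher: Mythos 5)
Your proof is correct and follows the same underlying route as the paper --- establish that $\hat H_\mathrm{BI}$ retains the sector-$\mu$ spectrum up to a rigid shift and then invoke Theorem~\ref{thm:GeneralSpectralBound} --- but the paper's own proof is only a sketch that asserts this via Corollary~\ref{thm:SymmetryBound}, whereas you supply the missing invariant-subspace argument: $[\hat H_\mathrm{BI},\hat f(\mathbb{S})]=0$ together with the vanishing of $\hat f(\mathbb{S})-f_\mu$ on its $f_\mu$-eigenspace makes the $\hat B$ term inert there, so sector $\mu$ survives as an invariant block on which $\hat H_\mathrm{BI}$ equals $\hat H_\mathrm{S}$. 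Your side remark that the optimality of $\hat H_\mathrm{S}$ is not needed for the lower bound is also right; the paper invokes that hypothesis only to argue that $\tfrac{1}{2}\Delta_s$ fails to be a valid lower bound for $\lambda_\mathrm{BI}$, not to establish $\tfrac{1}{2}\Delta_\mu$ as one.
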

\begin{proof}
    The first term in \eqref{block_invariant_Hamiltonian} corresponds to the symmetry-shifted Hamiltonian. If we assume that the symmetry-shifted Hamiltonian has been optimized so that it saturates its bound, $\tfrac{1}{2}\Delta_s$, then it is possible to show that $\lambda_\mathrm{BI}$ can become smaller than $\tfrac{1}{2}\Delta_s$ for a certain set of operators $\hat{B}$ and $\hat{f}(S)$. This illustrates that $\tfrac{1}{2}\Delta_s$ is not a proper lower bound. Nevertheless, Corollary 2 shows that $\lambda_{\mathrm{BI}}$ must remain lower bounded by the symmetry sector bound, $\tfrac{1}{2}\Delta_\mu$, regardless of the chosen operators $\hat{B}$ and $\hat{f}(\mathbb{S})$. This shows that $\tfrac{1}{2}\Delta_\mu$ is the true lower bound of the 1-norm of the block-invariant Hamiltonian, thereby completing the proof.
\end{proof}

Ultimately, while the conventional spectral bound, $\Delta/2$, provides a lower bound on the 1-norm $\lambda$ of the original Hamiltonian, Theorem 3 shows that $\Delta_s/2$ is the lower bound of the 1-norm of symmetry-shifted Hamiltonian proposed in \cite{loaiza2023reducing}. Furthermore, Theorem 4 proves that $\Delta_{\mu}/2$ lower bounds the so-called \emph{block-invariant} Hamiltonian proposed in \cite{loaiza2023block}. This substantiates the prevailing intuition that the symmetry-sector spectral range should inherently determine the query complexity lower bound for Hamiltonian oracles in quantum phase estimation. Given that all three Hamiltonians share an equivalent eigenvalue spectrum within the symmetry sector $\mu$, they are equally viable for constructing Hamiltonian oracles tailored to problem instances confined to this sector. These results not only set lower bounds on the query complexity for such oracles but also provide a hard cut-off in the expected improvement of tensor compression strategies designed to minimize the 1-norm. 

\subsection{Electronic Structure Hamiltonian}
Transitioning to the realm of electronic structure theory, we now apply these insights to the second quantized Hamiltonian, 
\begin{equation}
    \hat{H} = \sum_{\substack{pq \\\sigma}} h_{pq} \hat{a}^\dagger_{p,\sigma} \hat{a}_{q,\sigma} + \frac{1}{2}\sum_{\substack{pqrs \\\sigma\tau}} g_{pqrs} \hat{a}^\dagger_{p\sigma} \hat{a}^\dagger_{r\tau} \hat{a}_{s\tau} \hat{a}_{q\sigma},
    \label{electronic_hamiltonian}
\end{equation}
where $\{\sigma,\tau\}$ are spin indices, $\{p,q,r,s\}$ are spatial orbital indices, and $h_{pq}/g_{pqrs}$ are the conventional 1-electron/2-electron integrals defined with respect to the molecular spatial orbitals. The electronic structure Hamiltonian will generally conserve the total particle number $\hat{N}$, spin-projection $\hat{S}_z$, total spin $\hat{S}^2$ symmetries written as
\begin{align}
    \hat{N}   &= \sum_p ( a^\dagger_{p,\alpha} a_{p,\alpha} + a^\dagger_{p,\beta} a_{p,\beta} ) \, , \\
    \hat{S}_z &= \tfrac{1}{2}\sum_p ( a^\dagger_{p,\alpha} a_{p,\alpha} - a^\dagger_{p,\beta} a_{p,\beta} ) \, , \\
    \hat{S}^2 &= \hat{S}_+\hat{S}_- + \hat{S}_z(\hat{S}_z - 1) \, ,
\end{align}
where $\hat{S}_+ = \sum_p \hat{a}^\dagger_{p\alpha}\hat{a}_{p\beta}$ and $\hat{S}_- = (\hat{S}_+)^\dagger$. This Hamiltonian also remains invariant to any transformation involving a well-behaved function of these symmetries, $f(\hat{N},\hat{S}_z,\hat{S}^2)$. It is important to note that while the Numerical Results Section of this manuscript only considers particle number and spin symmetries, specific problem instances might also possess additional symmetries that should be included. For instance, permutational and point-group symmetries for specific molecular structures, such as rotation, reflection, and inversion, should also be considered. In addition, periodic symmetries in the context of crystalline systems could also greatly reduce the 1-norm \cite{ivanov2023quantum, rubin2023fault}. As a result, these bounds suggest that the extensive scaling behavior of block-invariant 1-norms should significantly differ from what has been observed in previous numerical studies.

%which preserves the spin-symmetries of the electron integral coefficients, as pointed in \cite{loaiza2023block}, and 

\emph{Scaling analysis.} While symmetry-aware spectral bounds provide insights into the performance limits of quantum algorithms based on Hamiltonian oracle models, they do not provide any insight into the extensive scaling behavior with respect to system size. Previous numerical analysis of the scaling behavior of the 1-norm for the electronic structure Hamiltonian has been performed based on sparse, double factorization, and THC \cite{lee2021even,koridon2021orbital} representations. It has been shown to scale polynomially, $\lambda = \mathcal{O}(\text{poly}(N_\mathrm{orb}))$, defined with respect to the number of orbitals, $N_\mathrm{orb}$. The symmetry-aware spectral bounds presented in this manuscript suggest a different scaling that should incorporate some dependence with respect to the total particle number $\eta$ or other symmetry-sector eigenvalue. In Appendix A, we present theoretical results for the scaling behavior of the symmetry-aware spectral bounds for the electronic structure Hamiltonian, highlighted by the upper bound dependence, $\Delta_\mu \leq \mathcal{O}(g(N_\mathrm{orb})\eta^2)$. Our results find that an underlying spatial orbital dependence originates from the function $g(N_\mathrm{orb})$, which is heavily dependent on the type of basis set. In the numerical results section below, we show that the family of correlation-consistent basis sets demonstrate a sublinear dependence, $g(N_\mathrm{orb}) = \mathcal{O}(N_\mathrm{orb}^x)$ with $x\leq 1$, for different chemical systems. The following section outlines a numerically tractable procedure for computing these bounds for large systems and presents quantitative results for several benchmark cases.

\section{Numerical evaluation of symmetry-aware spectral bounds}

To numerically determine the three spectral bounds ($\Delta_\mu/2,\Delta_\mathrm{s}/2,$ and $\Delta/2$) for molecular systems, the smallest and largest eigenvalues, denoted by \(E^\mu_\mathrm{min}\) and \(E^\mu_\mathrm{max}\), must be computed in every symmetry sector \(\mu\). In the context of electronic structure theory, we consider the three symmetries, $(\hat{N},\hat{S}^2,\hat{S}_z)$, outlined above with each sector uniquely defined by the triplet, $\mu = (\eta, S, m_s)$, where $\eta$ equals the total number of electrons, while $S$ and $m_s$ represent the total spin and its projection, respectively. In the absence of external magnetic fields and spin-orbit coupling, as considered throughout this manuscript, the quantum numbers $\eta$ and $S$ are sufficient to uniquely identify all energy eigenvalues. While an ideal calculation would provide these eigenvalues with full configuration interaction (FCI)-level precision, the computational demand becomes prohibitive for large system sizes. To address this, we propose a numerically tractable method that involves performing orbital optimization to identify the determinant that minimizes or maximizes the energy within each sector, similar to a restricted Hartree-Fock level calculation suitable in both closed-shell and open-shell cases. It is important to emphasize that this method does not provide multi-reference capabilities and, therefore, may not accurately describe certain physical states, such as open-shell singlets (detailed in Appendix B). Despite this limitation, our method adheres to the variational principle, ensuring that the estimated spectral ranges for each sector remains a lower bound to the exact FCI solution (see Figure \ref{fig:variational_methods}),
\begin{equation}
     \Delta_\mu^{(\mathrm{HF})} \leq \Delta_\mu^{(\mathrm{FCI})}.
\end{equation}
This guarantees that our numerical estimates serve as a conservative lower bound to the actual FCI spectral range.  While more expensive methodologies (e.g., CISD, CCSD, or DMRG) could be used to provide more accurate assessments of the spectral bounds, we found that the orbital-optimized methodology remains useful in providing non-trivial estimates of the spectral range as well as its scaling behavior (see Appendix B for details). It is worth noting that for large system sizes, the calculation of the spectral bound ($\Delta/2$) and symmetry bound ($\Delta_s/2$) becomes computationally intensive due to the total number of symmetry sector. In a fixed basis set with $N_\mathrm{orb}$ spatial orbitals, the total number of electrons will be limited to the range \( [0, 2N_\mathrm{orb}] \) across the entire Hilbert space. In the absence of external magnetic fields or spin-orbit coupling, the latter resulting from relativistic adjustments to the Hamiltonian, the energy eigenvalues characterized by constant $\eta$ and $S$, but varying $m_s$, will be degenerate. This implies that the total number of symmetry sectors $(\eta,S)$ with unique energy eigenvalues will be given by,
\begin{equation}
    \frac{(N_\mathrm{orb}+1)(N_\mathrm{orb}+2)}{2}.
\end{equation}
While this number is only quadratic, it renders the numerical evaluation of $\Delta/2$ and $\Delta_s/2$ quite resource-intensive for large basis sets.
% \vspace{-0.5cm}
% \vspace{-0.25cm}
\begin{figure}[t!]
    \centering
    \includegraphics[width=.5\linewidth]{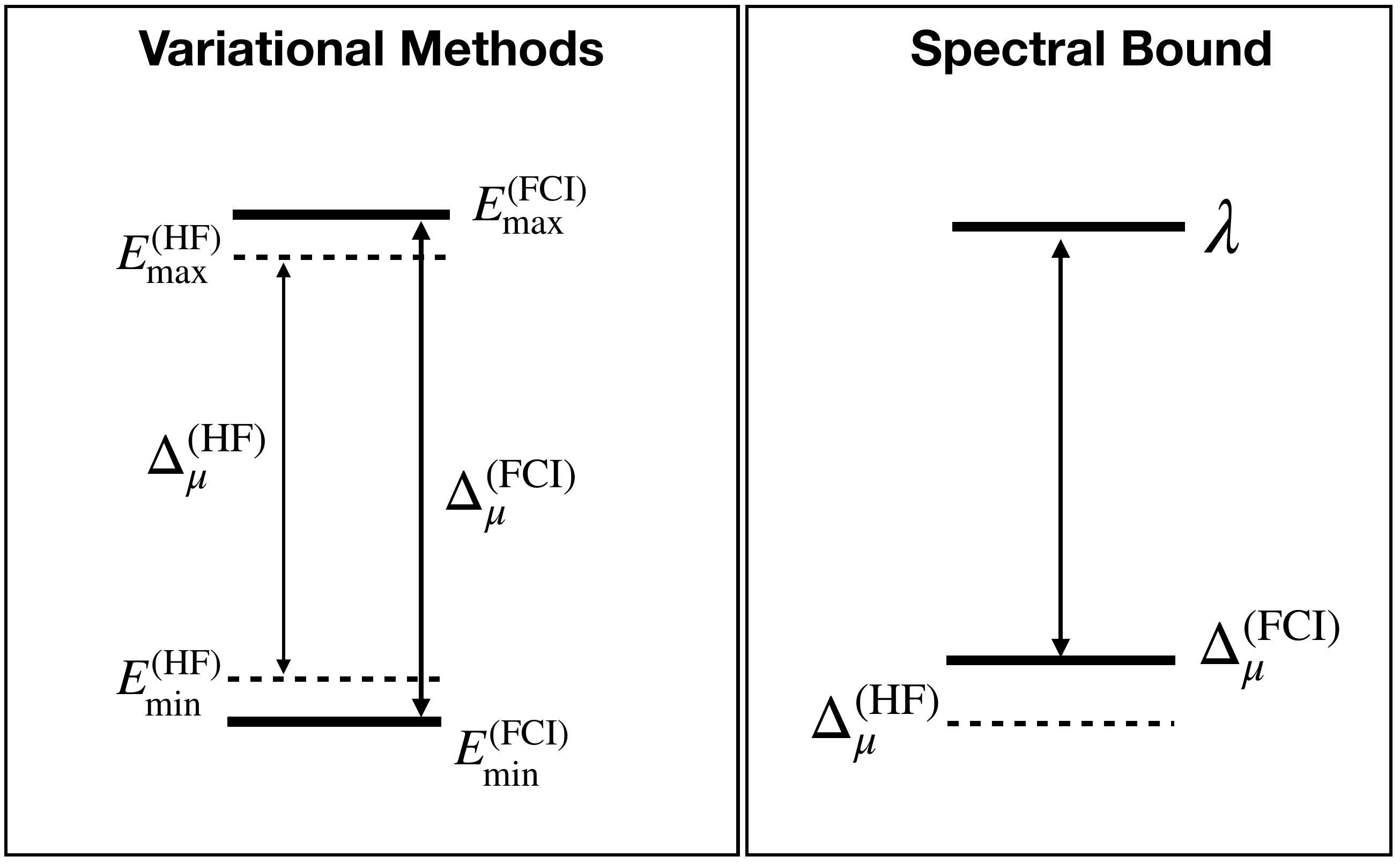}%
    \caption{ Overview of variational methods (left panel) and implications on the spectral bound numerical estimates (right panel).}
    \label{fig:variational_methods}
\end{figure}
\subsection{Orbital Optimization Procedure}
In this section, we outline an orbital optimization process that applies the same set of orbital parameters to both spin-up ($\alpha$) and spin-down ($\beta$) orbitals, and is applicable in both closed-shell and open-shell configurations. For convenience, we use ($\eta_\alpha,\eta_\beta$) notation to identify each symmetry sector, as opposed $(\eta,S)$ as above, where $\eta_\alpha$ and $\eta_\beta$ represent the number of spin-up and spin-down electrons. The proposed methodology finds the minimum eigenvalue by solving the optimization problem, $E_\mathrm{min}^{(\eta_\alpha,\eta_\beta)} = \min_{\boldsymbol{\kappa}} E_\mathrm{HF}(\boldsymbol{\kappa})$, where
\begin{align}
    E_\mathrm{HF}(\boldsymbol{\kappa})\! =\! \sum_{i}^{\eta_\alpha}h_{ii}(\boldsymbol{\kappa}) + \sum_{i}^{\eta_\beta} h_{ii}(\boldsymbol{\kappa})+ \! \frac{1}{2}\!\!\left( \!\sum_{i,j}^{\eta_\alpha,\eta_\alpha}\!\!g_{iijj}(\boldsymbol{\kappa}) + \!\sum_{i,j}^{\eta_\beta,\eta_\beta}\!\!g_{iijj}(\boldsymbol{\kappa})
    + 2\!\!\sum_{i,j}^{\eta_\alpha,\eta_\beta}\!\!g_{iijj}(\boldsymbol{\kappa})
    - \!\sum_{i,j}^{\eta_\alpha,\eta_\alpha}\!\!g_{ijji}(\boldsymbol{\kappa}) -
\!\sum_{i,j}^{\eta_\beta,\eta_\beta}\!\!g_{ijji}(\boldsymbol{\kappa}) \right)\!\!,
\end{align}
% \begin{align}
%     E_\mathrm{HF}(\boldsymbol{\kappa}) = 2\sum_{i}^{\eta_\alpha}h_{ii}(\boldsymbol{\kappa})+ \sum_{i,j}^{\eta_\alpha}(2g_{iijj}(\boldsymbol{\kappa})-g_{ijji}(\boldsymbol{\kappa})),
% \end{align}
with orbital-transformed one-electron $h_{ij}$ and two-electron $g_{ijkl}$ tensor coefficients,
\begin{align}
    h_{ij}(\boldsymbol{\kappa}) &= \sum_{pq}h_{pq} [\exp(-\boldsymbol{\kappa})]_{pi}[\exp(-\boldsymbol{\kappa})]_{qj}, \\
    g_{ijkl}(\boldsymbol{\kappa}) &= \sum_{pqrs}g_{pqrs}[\exp(-\boldsymbol{\kappa})]_{pi}[\exp(-\boldsymbol{\kappa})]_{qj}
    [\exp(-\boldsymbol{\kappa})]_{rk}[\exp(-\boldsymbol{\kappa})]_{sl}
    ,
\end{align}
defined with respect to the unitary matrix, $\exp(-\boldsymbol{\kappa})$, written in terms of the anti-Hermitian matrix, $\boldsymbol{\kappa}$, where $\boldsymbol{\kappa}^\dagger = -\boldsymbol{\kappa}$. To improve the stability of the optimization procedure (especially relevant for finding the maximun eigenvalue), we used the pre-orthogonalized one-electron $h_{pq}$ and two-electron integrals $g_{pqrs}$,
\begin{align}
    h_{pq} &= \sum_{\mu\nu}h_{\mu\nu}[S^{-\frac{1}{2}}]_{\mu p} [S^{-\frac{1}{2}}]_{\nu q},  \label{hij}\\
    g_{pqrs} &= \sum_{\mu\nu\rho\sigma} g_{\mu\nu\rho\sigma}
    [S^{-\frac{1}{2}}]_{\mu p} [S^{-\frac{1}{2}}]_{\nu q}
    [S^{-\frac{1}{2}}]_{\rho r} [S^{-\frac{1}{2}}]_{\sigma s},
    \label{gijkl}
\end{align}
as our initial starting guess. These quantities are defined with respect to the atomic orbital overlap matrix ($S_{\mu \nu}$) as well as the one-electron ($h_{\mu\nu}$) and two-electron ($g_{\mu\nu\rho\sigma}$) integrals defined in the atomic orbital basis. For the optimization of $N_\mathrm{orb}$ real spatial orbitals, it is sufficient to consider the real part of $\boldsymbol{\kappa}$, which contains only $N_\mathrm{orb}(N_\mathrm{orb}-1)/2$ parameters. The maximum eigenvalue, $E_\mathrm{max}^{(\eta_\alpha,\eta_\beta)}$, can be found with the same numerical minimizer by taking the negative of the one-electron and two-electron integrals. It is important to note that for RHF and ROHF techniques, the associated wave function will be an eigenfunction of the total spin operator, $\hat{S}^2$, with an expectation value equal to,
\begin{align}
    \braket{\hat{S}^2} = \left( \frac{\eta_\alpha-\eta_\beta}{2} \right)\!\!\left( \frac{\eta_\alpha-\eta_\beta}{2} + 1\right),
\end{align}
assuming $\eta_\alpha\geq \eta_\beta$. As a result, the spin quantum number $s$ is equal to the spin-projection quantum number $|m_s|$. Although this might appear restrictive, this implies that in scenarios devoid of magnetic fields and spin-orbit interactions, adjusting $\eta_\alpha$ and $\eta_\beta$ suffices to encompass all possible symmetry sectors characterized by unique energy eigenvalues.

\section{Numerical Results}  

Maintaining the tradition of previous work \cite{reiher2017elucidating, lee2021even}, we first compute the symmetry-aware spectral bounds in Table 1 for FeMoco based on two distinct active space models. The approach by Reiher \emph{et al.} \cite{reiher2017elucidating} employs a $(54e,54o)$ active space, targeting the singlet ground-state energy ($S=0$). The model by Li \emph{et al.}  \cite{li2019electronic} captures the open-shell nature of the cluster with a $(113e,76o)$ active space, specifying a total spin of $S=3/2$. Additionally, we determine the spectral bounds for the largest active space of P450, as defined by Goings \emph{et al.} \cite{goings2022reliably}, aiming for the ground-state energy within a $(63e,58o)$ active space and a total spin of $S=5/2$. Finally, we consider the heme-artemisinin system relevant to drug design proposed by Cortes \emph{et al.} \cite{cortes2023fault}, requiring a $(90e,83o)$ active space with a total spin of $S=1$.

\begin{table}[b!]
\centering
\begin{tabular}{|c|l|a|a|b|a|}
\hline
%%%%%%%%%%%%%%%%%%%%%%%%%%%%%%%%%%%%%%%%%%%%%%%%%%%%%%%%%%%%%%%%%%%%%%%%%%%%
System   & Bound & 1-body & 2-body & Total (incoh.)  & Total   \\ \hline
%%%%%%%%%%%%%%%%%%%%%%%%%%%%%%%%%%%%%%%%%%%%%%%%%%%%%%%%%%%%%%%%%%%%%%%%%%%%
\multirow{3}{*}{\textbf{FeMoco}} & $\Delta/2$ & 38.6 & 111.7 & 150.3 & 135.0 \\
 &  $\Delta_\mathrm{s}/2$    & 19.5 & 34.0 & 53.5 & 45.7\\
(Reiher) &  $\Delta_\mu/2$ & 19.5 & 26.4 & 45.9 & 38.8 \\ \hline \hline    
%%%%%%%% LiH
\multirow{3}{*}{\textbf{FeMoco}} & $\Delta/2$      & 478.1 & 261.2  & 739.3 & 559.4 \\
                    & $\Delta_\mathrm{s}/2$    & 58.5  & 77.3   & 135.8 & 106.2 \\
                   (Li) & $\Delta_\mu/2$ & 41.2  & 67.7   & 108.9 & 57.6\\ \hline \hline
\multirow{3}{*}{\textbf{P450} } & $\Delta/2$      & 68.0 & 164.9 & 232.9 & 209.7 \\
                        & $\Delta_\mathrm{s}/2$    & 32.2 & 43.0  & 75.2  & 57.5 \\
                        & $\Delta_\mu/2$ & 31.8 & 37.5  & 69.3  & 49.9 \\ \hline \hline    
\multirow{3}{*}{\textbf{Heme-} } & $\Delta/2$      & 168.6 & 233.6 & 402.2   & 361.6 \\
                        & $\Delta_\mathrm{s}/2$    & 107.3 & 86.5  & 193.8  & 167.5 \\
   \textbf{Artemisinin} & $\Delta_\mu/2$           & 106.3 & 81.7  & 188.0  & 135.3 \\ \hline    
\end{tabular}
\caption{The spectral bound $\Delta/2$, the symmetry bound $\Delta_s/2$ and the symmetry-sector bound $\Delta_\mu/2$  for active space benchmark systems proposed in Refs.~\cite{reiher2017elucidating, lee2021even, goings2022reliably, cortes2023fault}, reported in units of Hartree. The spectral bounds of the 1-body and 2-body terms of the Hamiltonian are given separately, and their sum is the incoherent total spectral bound. The spectral bound of the total Hamiltonian in the last column is always lower.}
\label{tab:results_1}
\end{table}

\begin{figure*}[ht!]
    \centering
    % First panel
    \begin{minipage}{0.33\textwidth}
        \centering
        \includegraphics[width=\linewidth]{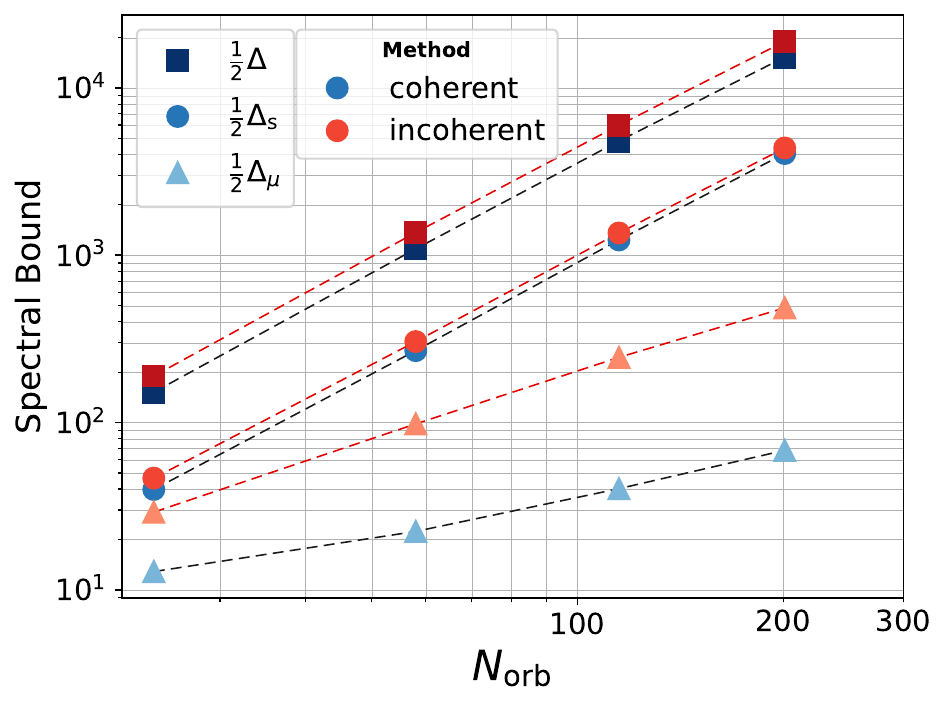} 
    \makebox[\linewidth][c]{\hspace{0.5cm}BeH$_2$}
    \end{minipage}\hfill
    % Second panel
    \begin{minipage}{0.33\textwidth}
        \centering
        \includegraphics[width=\linewidth]{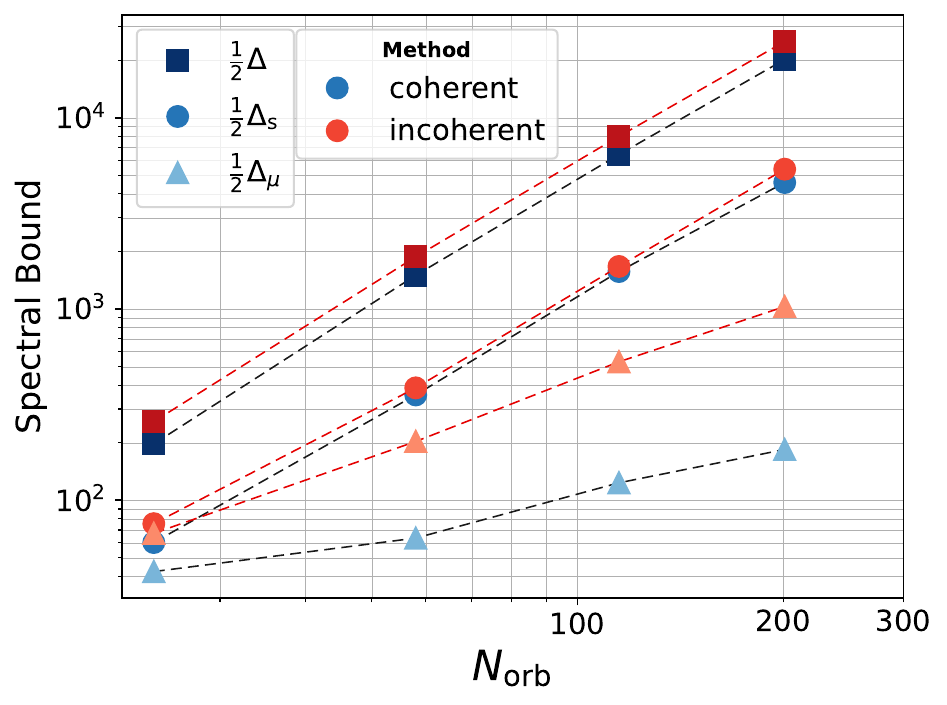} 
    \makebox[\linewidth][c]{\hspace{0.5cm}H$_2$O}
    \end{minipage}\hfill
    % Third panel
    \begin{minipage}{0.33\textwidth}
        \centering
        \includegraphics[width=\linewidth]{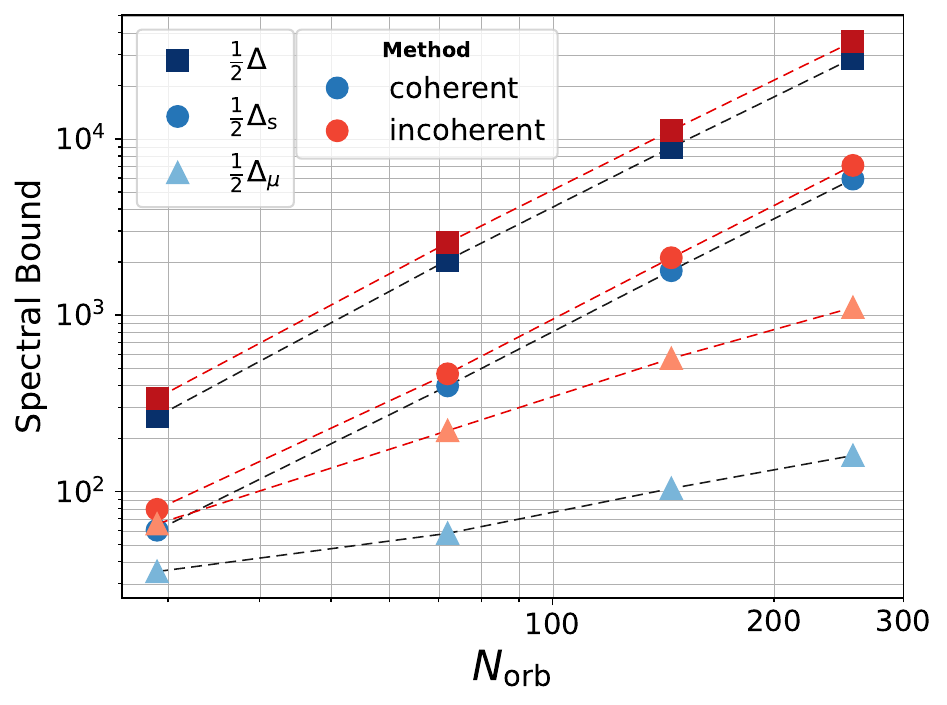} 
    \makebox[\linewidth][c]{\hspace{0.5cm}NH$_3$}
    \end{minipage}
\caption{The spectral bound $\Delta/2$, symmetry bound $\Delta_s/2$ and symmetry-sector bound $\Delta_\mu/2$ as a function of spatial orbitals, $N_\mathrm{orb}$, with correlation-consistent basis sets
(cc-pVDZ, cc-pVTZ, cc-pVQZ, cc-pV5Z) for three molecules BeH$_2$, H$_2$O, and NH$_3$, with all values reported in units of Hartree.}
\label{fig:fsp-basis}    
\end{figure*}
In addition to providing the three spectral bounds ($\Delta_\mu/2,\Delta_\mathrm{s}/2,$ and $\Delta/2$) for the total Hamiltonian in the last column, we have also computed the spectral bound for the 1-body and 2-body terms separately (the details of the decomposition is outlined in Appendix A). The second last column (in gray) denotes the sum of the 1-body and 2-body bounds, which we call the incoherent spectral bound. This column should serve as the reference point for most contemporary tensor factorization techniques, including double factorization and tensor hypercontraction (THC), due to the fact that these methods factorize the 1-body and 2-body terms independently  \cite{von2021quantum,lee2021even}. 

As highlighted in Table \ref{tab:results_1}, the discrepancy between the incoherent and coherent spectral bounds suggests that applying tensor compression methods to a \emph{dressed} four-index tensor, which integrates both 1-body and 2-body contributions, might provide a further reduction in the 1-norm. Our prior research demonstrated the efficacy of this concept within the context of observable estimation \cite{cortes2023fault}, achieving up to a four-fold reduction in the 1-norm for specific cases. This decrease in the 1-norm results from the natural cancellation of energy contributions within the dressed tensor. Nonetheless, a thorough examination of this approach is warranted, particularly in the context of estimating ground-state energies, with careful consideration of its impact on $C_{\mathcal{W}[H]}$. 

Traditional tensor compression methods, like double factorization and THC, which do not incorporate symmetry information, are constrained by the spectral bound ($\Delta/2$) found in the top row. As a point of reference, the published 1-norm values for FeMoco based on THC correspond to 306 (Reiher \emph{et al.}~\cite{reiher2017elucidating}) and 1202 (Lee \emph{et al.}~\cite{lee2021even}) and for the P450 Hamiltonian
of 388.9 (Goings \emph{et al.}~\cite{goings2022reliably}), in units of Hartree. These numbers suggest that additional enhancements to THC, such as applying regularization techniques \cite{goings2022reliably, oumarou2022accelerating}, refining initial starting guesses, or relaxing the constraints on the THC rank, could at most achieve improvement factors of around 2.0 and 1.6, respectively. On the other hand, the symmetry-based bounds suggest that tensor compression techniques that exploit symmetry information in the future might be able to achieve an approximate improvement factor of up to 6 and 21 for the FeMoco-based benchmark systems outlined above. A recently introduced symmetry-shifted version of double factorization (SCDF) was shown to violate the conventional bound ($\Delta/2$) and achieved a 1-norm of 78.0 for FeMoco (Reiher) and 111.3 for P450 (Rocca \emph{et al.}~\cite{rocca2024reducing}), but remained the symmetry-shifted spectral bound ($\Delta_\mathrm{s}/2$) as expected.

\emph{Complete Basis Set Limit Scaling.} To further highlight the differences between these bounds, we compute the scaling behavior of the three spectral bounds ($\Delta_\mu/2,\Delta_\mathrm{s}/2,$ and $\Delta/2$) for BeH$_2$, H$_2$O, and NH$_3$. These molecules and their respective geometries were selected based on prior research \cite{loaiza2023reducing, loaiza2023block} because they are small enough to enable complete basis set extrapolation analysis. The numerical analysis is conducted as a function of basis set size within the correlation-consistent (cc-pVnZ) basis sets, ranging from cc-pVDZ to cc-pV5Z. The size of each basis set is quantified explicitly by the number of spatial orbitals, denoted as $N_\mathrm{orb}$. For each of these systems, the symmetry sector bound ($\Delta_\mu/2$) is fixed to the singlet ($S=0$) sector with the total electron counts of $6$, $10$, and $10$, respectively. The resulting data are illustrated in Fig.~\ref{fig:fsp-basis}, where both the incoherent (orange markers) and regular (blue markers) spectral bounds are presented for comparative analysis.

\begin{figure}[b!]
    \includegraphics[width=7.4cm]{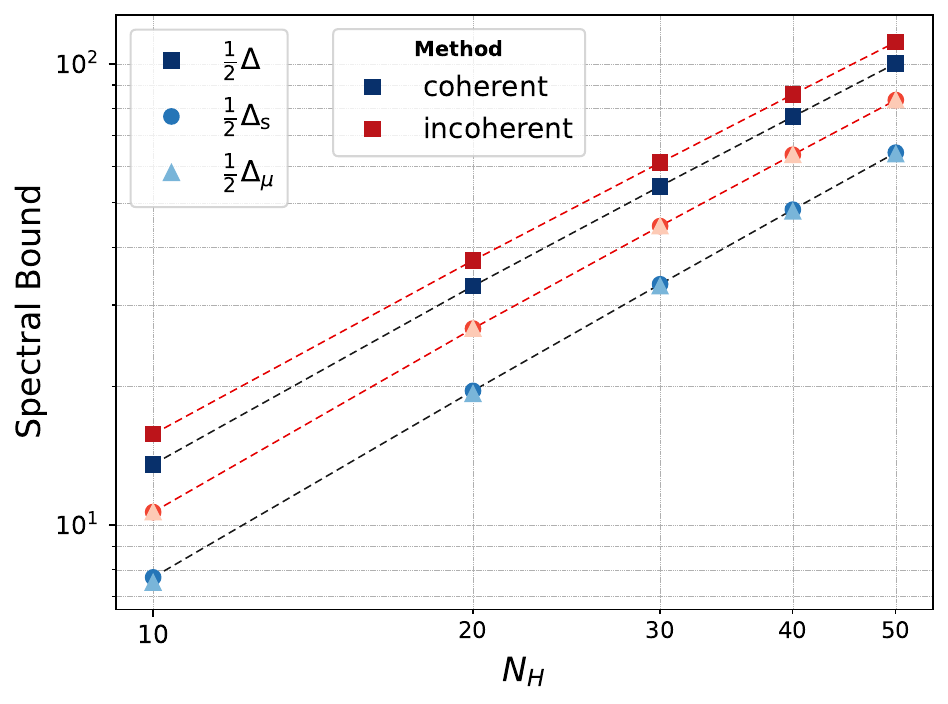}
    \caption{Symmetry-aware spectral bounds as a function of the number of hydrogen atoms, $N_H$, reported in units of Hartree.}
    \label{fig:fsp-hydrogen}
\end{figure}

The coherent spectral bound is lower than its incoherent counterpart in every instance. This disparity becomes increasingly pronounced with larger basis sets, exhibiting nearly an order of magnitude difference between the symmetry-sector bounds. Consistent with the theoretical findings from the previous section, our analysis reveals markedly different scaling behaviors between the regular bound, the symmetry bound, and the symmetry-sector bound, as evidenced by their distinct slopes. This leads to a nearly three-order-of-magnitude difference between the regular spectral bound ($\Delta/2$) and the symmetry-sector bound ($\Delta_\mu/2$) in the cc-pV5Z basis. In particular, the symmetry-sector bound $\Delta_\mu/2$ exhibits a sublinear slope, $g(N)=\mathcal{O}(N^x)$, with $x\leq 1$ across all three molecular systems.

It is important to differentiate this orbital number dependence in the context of larger basis sets from the continuum limit analysis of the $H_4$ system by Lee \emph{et al.} \cite{lee2021even}, which employed a fixed basis set (cc-pVTZ) and varied the number of orbitals within. That scenario could lead to an almost constant scaling, $g(N)=\mathcal{O}(1)$, for symmetry-aware spectral bounds given that the basis set parameters are held constant. The situation changes, however, with basis set extrapolation and the use of different basis sets, as in our study. Here, the sublinear dependence of $g(N)$ is attributed to the larger exponents used in cc-pV5Z compared to smaller basis sets such as cc-pVDZ. These larger exponents provide enhanced resolution of the Coulomb cusp, but also increase the magnitude of the integral coefficients. Larger basis sets also provide improved behavior at greater distances from the nucleus to achieve the complete basis set limit.

\emph{Thermodynamic Limit Scaling.} To conclude our numerical analysis, we also compute the spectral bounds of a linear hydrogen chain at half-filling as a function of the number of hydrogen atoms, $N_H$, ranging from 10 to 50. The base geometry consists of a linear array of hydrogen atoms separated by 0.74 \r{A}ngstrom expanded in the minimal STO-6G basis set. In Fig.~\ref{fig:fsp-hydrogen}, we find that while there is nearly an order of magnitude difference between the regular spectral bound ($\Delta/2$) and the symmetry bound ($\Delta_\mathrm{s}/2$), the associated difference between the symmetry bound and the symmetry sector bound ($\Delta_\mu/2$) is found to be nearly negligible. We attribute this to the half-filling sector typically having the largest spectral range among all symmetry sectors for this benchmark system. Since the gap between conventional and symmetry-aware bounds is more pronounced in the low-filling regime (as also highlighted in Fig.~\ref{fig:fsp-basis}), symmetry-shifted tensor compression techniques should be sufficient to saturate the achievable 1-norm bounds in the half-filling regime.

\section{Conclusion} 

This study introduces symmetry-aware spectral bounds relevant to a wide variety of quantum algorithms in physics and chemistry. For qubitization-based QPE, we find an asymptotic Toffoli complexity of $\mathcal{O}(\Delta_\mu N_\mathrm{orb} )$, assuming a linear $N_\mathrm{orb}$ scaling of the qubitization walk operator as observed in THC \cite{lee2021even}. Combined with the upper bound scaling of the symmetry-sector bound, $\Delta_\mu = \mathcal{\tilde{O}}(N_\mathrm{orb}^x\eta^2)$, derived in this work, we predict a final Toffoli complexity of $ \mathcal{\tilde{O}}(N_\mathrm{orb}^{1+x}\eta^2)$. Unlike previous work, this scaling has an explicit dependence on the number of particles $\eta$ making it much more favorable in the continuum and complete basis set limit. In contrast, first-quantization methods using plane waves have a Toffoli complexity of $\mathcal{O}(\eta^{8/3}N_\mathrm{orb}^{1/3})$ \cite{su2021fault}. This raises the question of whether block-encoding strategies with explicit knowledge of symmetries could be constructed. Such dependence would enhance the competitiveness of second-quantized methods using localized Gaussian orbitals compared with first-quantization techniques in these limits. 

Our findings also indicate that although modern tensor compression methods, especially THC, are close to approaching the conventional spectral bound $\Delta/2$, there remains a lot of room for improvement in reaching the symmetry ($\Delta_s/2$) and symmetry-sector ($\Delta_\mu/2)$ bounds. Loaiza \emph{et al.}~\cite{loaiza2023reducing, loaiza2023block} proposed symmetry-shifted and block-invariant Hamiltonians as a methodology for reducing the 1-norms. Recent work by Rocca \emph{et al.}~\cite{rocca2024reducing} exploited such symmetry shifts to reduce the 1-norm towards the symmetry bound $\Delta_s/2$ for FeMoco and the P450. 
%Note that while they do propose a way of building a block-encoded function that is guaranteed to saturate the lower bound \cite{loaiza2023reducing}, the cost of building this function is prohibitive. 
However, there is still significant opportunity for advancements in tensor compression techniques, in particular by leveraging the full set of symmetries to reach 1-norms saturating the symmetry-sector bound $\Delta_\mu/2$. While our study has focused on performance limits tied to 1-norm, exploring other constraints, such as those related to the block-encoding cost, $C_{\mathcal{W}[H]}$, of the Hamiltonian remains critical. Future research should seek to define the optimal trade-off, or Pareto frontier, between these factors for enhancing quantum algorithm efficiency.
This exploration highlights multiple promising paths for future enhancements in quantum algorithm capabilities.

\bibliography{manuscript.bib}

%apsrev4-2.bst 2019-01-14 (MD) hand-edited version of apsrev4-1.bst
%Control: key (0)
%Control: author (72) initials jnrlst
%Control: editor formatted (1) identically to author
%Control: production of article title (-1) disabled
%Control: page (0) single
%Control: year (1) truncated
%Control: production of eprint (0) enabled
\begin{thebibliography}{46}%
\makeatletter
\providecommand \@ifxundefined [1]{%
 \@ifx{#1\undefined}
}%
\providecommand \@ifnum [1]{%
 \ifnum #1\expandafter \@firstoftwo
 \else \expandafter \@secondoftwo
 \fi
}%
\providecommand \@ifx [1]{%
 \ifx #1\expandafter \@firstoftwo
 \else \expandafter \@secondoftwo
 \fi
}%
\providecommand \natexlab [1]{#1}%
\providecommand \enquote  [1]{``#1''}%
\providecommand \bibnamefont  [1]{#1}%
\providecommand \bibfnamefont [1]{#1}%
\providecommand \citenamefont [1]{#1}%
\providecommand \href@noop [0]{\@secondoftwo}%
\providecommand \href [0]{\begingroup \@sanitize@url \@href}%
\providecommand \@href[1]{\@@startlink{#1}\@@href}%
\providecommand \@@href[1]{\endgroup#1\@@endlink}%
\providecommand \@sanitize@url [0]{\catcode `\\12\catcode `\$12\catcode `\&12\catcode `\#12\catcode `\^12\catcode `\_12\catcode `\%12\relax}%
\providecommand \@@startlink[1]{}%
\providecommand \@@endlink[0]{}%
\providecommand \url  [0]{\begingroup\@sanitize@url \@url }%
\providecommand \@url [1]{\endgroup\@href {#1}{\urlprefix }}%
\providecommand \urlprefix  [0]{URL }%
\providecommand \Eprint [0]{\href }%
\providecommand \doibase [0]{https://doi.org/}%
\providecommand \selectlanguage [0]{\@gobble}%
\providecommand \bibinfo  [0]{\@secondoftwo}%
\providecommand \bibfield  [0]{\@secondoftwo}%
\providecommand \translation [1]{[#1]}%
\providecommand \BibitemOpen [0]{}%
\providecommand \bibitemStop [0]{}%
\providecommand \bibitemNoStop [0]{.\EOS\space}%
\providecommand \EOS [0]{\spacefactor3000\relax}%
\providecommand \BibitemShut  [1]{\csname bibitem#1\endcsname}%
\let\auto@bib@innerbib\@empty
%</preamble>
\bibitem [{\citenamefont {Low}\ and\ \citenamefont {Chuang}(2017)}]{low2017optimal}%
  \BibitemOpen
  \bibfield  {author} {\bibinfo {author} {\bibfnamefont {G.~H.}\ \bibnamefont {Low}}\ and\ \bibinfo {author} {\bibfnamefont {I.~L.}\ \bibnamefont {Chuang}},\ }\href@noop {} {\bibfield  {journal} {\bibinfo  {journal} {Physical review letters}\ }\textbf {\bibinfo {volume} {118}},\ \bibinfo {pages} {010501} (\bibinfo {year} {2017})}\BibitemShut {NoStop}%
\bibitem [{\citenamefont {Low}\ and\ \citenamefont {Chuang}(2019)}]{low2019hamiltonian}%
  \BibitemOpen
  \bibfield  {author} {\bibinfo {author} {\bibfnamefont {G.~H.}\ \bibnamefont {Low}}\ and\ \bibinfo {author} {\bibfnamefont {I.~L.}\ \bibnamefont {Chuang}},\ }\href@noop {} {\bibfield  {journal} {\bibinfo  {journal} {Quantum}\ }\textbf {\bibinfo {volume} {3}},\ \bibinfo {pages} {163} (\bibinfo {year} {2019})}\BibitemShut {NoStop}%
\bibitem [{\citenamefont {Gily{\'e}n}\ \emph {et~al.}(2019)\citenamefont {Gily{\'e}n}, \citenamefont {Su}, \citenamefont {Low},\ and\ \citenamefont {Wiebe}}]{gilyen2019quantum}%
  \BibitemOpen
  \bibfield  {author} {\bibinfo {author} {\bibfnamefont {A.}~\bibnamefont {Gily{\'e}n}}, \bibinfo {author} {\bibfnamefont {Y.}~\bibnamefont {Su}}, \bibinfo {author} {\bibfnamefont {G.~H.}\ \bibnamefont {Low}},\ and\ \bibinfo {author} {\bibfnamefont {N.}~\bibnamefont {Wiebe}},\ }in\ \href@noop {} {\emph {\bibinfo {booktitle} {Proceedings of the 51st Annual ACM SIGACT Symposium on Theory of Computing}}}\ (\bibinfo {year} {2019})\ pp.\ \bibinfo {pages} {193--204}\BibitemShut {NoStop}%
\bibitem [{\citenamefont {Chan}\ \emph {et~al.}(2023)\citenamefont {Chan}, \citenamefont {Mu{\~n}oz-Ramo},\ and\ \citenamefont {Fitzpatrick}}]{chan2023simulating}%
  \BibitemOpen
  \bibfield  {author} {\bibinfo {author} {\bibfnamefont {H.~H.~S.}\ \bibnamefont {Chan}}, \bibinfo {author} {\bibfnamefont {D.}~\bibnamefont {Mu{\~n}oz-Ramo}},\ and\ \bibinfo {author} {\bibfnamefont {N.}~\bibnamefont {Fitzpatrick}},\ }\href@noop {} {\bibfield  {journal} {\bibinfo  {journal} {arXiv preprint arXiv:2303.06161}\ } (\bibinfo {year} {2023})}\BibitemShut {NoStop}%
\bibitem [{\citenamefont {Rall}(2020)}]{rall2020quantum}%
  \BibitemOpen
  \bibfield  {author} {\bibinfo {author} {\bibfnamefont {P.}~\bibnamefont {Rall}},\ }\href@noop {} {\bibfield  {journal} {\bibinfo  {journal} {Physical Review A}\ }\textbf {\bibinfo {volume} {102}},\ \bibinfo {pages} {022408} (\bibinfo {year} {2020})}\BibitemShut {NoStop}%
\bibitem [{\citenamefont {Tong}\ \emph {et~al.}(2021)\citenamefont {Tong}, \citenamefont {An}, \citenamefont {Wiebe},\ and\ \citenamefont {Lin}}]{tong2021fast}%
  \BibitemOpen
  \bibfield  {author} {\bibinfo {author} {\bibfnamefont {Y.}~\bibnamefont {Tong}}, \bibinfo {author} {\bibfnamefont {D.}~\bibnamefont {An}}, \bibinfo {author} {\bibfnamefont {N.}~\bibnamefont {Wiebe}},\ and\ \bibinfo {author} {\bibfnamefont {L.}~\bibnamefont {Lin}},\ }\href@noop {} {\bibfield  {journal} {\bibinfo  {journal} {Physical Review A}\ }\textbf {\bibinfo {volume} {104}},\ \bibinfo {pages} {032422} (\bibinfo {year} {2021})}\BibitemShut {NoStop}%
\bibitem [{\citenamefont {Steudtner}\ \emph {et~al.}(2023)\citenamefont {Steudtner}, \citenamefont {Morley-Short}, \citenamefont {Pol}, \citenamefont {Sim}, \citenamefont {Cortes}, \citenamefont {Loipersberger}, \citenamefont {Parrish}, \citenamefont {Degroote}, \citenamefont {Moll}, \citenamefont {Santagati},\ and\ \citenamefont {Streif}}]{steudtner2023fault}%
  \BibitemOpen
  \bibfield  {author} {\bibinfo {author} {\bibfnamefont {M.}~\bibnamefont {Steudtner}}, \bibinfo {author} {\bibfnamefont {S.}~\bibnamefont {Morley-Short}}, \bibinfo {author} {\bibfnamefont {W.}~\bibnamefont {Pol}}, \bibinfo {author} {\bibfnamefont {S.}~\bibnamefont {Sim}}, \bibinfo {author} {\bibfnamefont {C.~L.}\ \bibnamefont {Cortes}}, \bibinfo {author} {\bibfnamefont {M.}~\bibnamefont {Loipersberger}}, \bibinfo {author} {\bibfnamefont {R.~M.}\ \bibnamefont {Parrish}}, \bibinfo {author} {\bibfnamefont {M.}~\bibnamefont {Degroote}}, \bibinfo {author} {\bibfnamefont {N.}~\bibnamefont {Moll}}, \bibinfo {author} {\bibfnamefont {R.}~\bibnamefont {Santagati}},\ and\ \bibinfo {author} {\bibfnamefont {M.}~\bibnamefont {Streif}},\ }\href {https://doi.org/10.22331/q-2023-11-06-1164} {\bibfield  {journal} {\bibinfo  {journal} {{Quantum}}\ }\textbf {\bibinfo {volume} {7}},\ \bibinfo {pages} {1164} (\bibinfo {year} {2023})}\BibitemShut {NoStop}%
\bibitem [{\citenamefont {An}\ \emph {et~al.}(2023)\citenamefont {An}, \citenamefont {Childs},\ and\ \citenamefont {Lin}}]{an2023quantum}%
  \BibitemOpen
  \bibfield  {author} {\bibinfo {author} {\bibfnamefont {D.}~\bibnamefont {An}}, \bibinfo {author} {\bibfnamefont {A.~M.}\ \bibnamefont {Childs}},\ and\ \bibinfo {author} {\bibfnamefont {L.}~\bibnamefont {Lin}},\ }\href@noop {} {\bibfield  {journal} {\bibinfo  {journal} {arXiv preprint arXiv:2312.03916}\ } (\bibinfo {year} {2023})}\BibitemShut {NoStop}%
\bibitem [{\citenamefont {Krovi}(2023)}]{krovi2023improved}%
  \BibitemOpen
  \bibfield  {author} {\bibinfo {author} {\bibfnamefont {H.}~\bibnamefont {Krovi}},\ }\href@noop {} {\bibfield  {journal} {\bibinfo  {journal} {Quantum}\ }\textbf {\bibinfo {volume} {7}},\ \bibinfo {pages} {913} (\bibinfo {year} {2023})}\BibitemShut {NoStop}%
\bibitem [{\citenamefont {Santagati}\ \emph {et~al.}(2024)\citenamefont {Santagati}, \citenamefont {Aspuru-Guzik}, \citenamefont {Babbush}, \citenamefont {Degroote}, \citenamefont {Gonz{\'a}lez}, \citenamefont {Kyoseva}, \citenamefont {Moll}, \citenamefont {Oppel}, \citenamefont {Parrish}, \citenamefont {Rubin}, \citenamefont {Streif}, \citenamefont {Tautermann}, \citenamefont {Weiss}, \citenamefont {Wiebe},\ and\ \citenamefont {Utschig-Utschig}}]{Santagati2024}%
  \BibitemOpen
  \bibfield  {author} {\bibinfo {author} {\bibfnamefont {R.}~\bibnamefont {Santagati}}, \bibinfo {author} {\bibfnamefont {A.}~\bibnamefont {Aspuru-Guzik}}, \bibinfo {author} {\bibfnamefont {R.}~\bibnamefont {Babbush}}, \bibinfo {author} {\bibfnamefont {M.}~\bibnamefont {Degroote}}, \bibinfo {author} {\bibfnamefont {L.}~\bibnamefont {Gonz{\'a}lez}}, \bibinfo {author} {\bibfnamefont {E.}~\bibnamefont {Kyoseva}}, \bibinfo {author} {\bibfnamefont {N.}~\bibnamefont {Moll}}, \bibinfo {author} {\bibfnamefont {M.}~\bibnamefont {Oppel}}, \bibinfo {author} {\bibfnamefont {R.~M.}\ \bibnamefont {Parrish}}, \bibinfo {author} {\bibfnamefont {N.~C.}\ \bibnamefont {Rubin}}, \bibinfo {author} {\bibfnamefont {M.}~\bibnamefont {Streif}}, \bibinfo {author} {\bibfnamefont {C.~S.}\ \bibnamefont {Tautermann}}, \bibinfo {author} {\bibfnamefont {H.}~\bibnamefont {Weiss}}, \bibinfo {author} {\bibfnamefont {N.}~\bibnamefont {Wiebe}},\ and\ \bibinfo {author} {\bibfnamefont {C.}~\bibnamefont {Utschig-Utschig}},\ }\href@noop {} {\bibfield
   {journal} {\bibinfo  {journal} {Nature Physics}\ } (\bibinfo {year} {2024})}\BibitemShut {NoStop}%
\bibitem [{\citenamefont {Martyn}\ \emph {et~al.}(2021)\citenamefont {Martyn}, \citenamefont {Rossi}, \citenamefont {Tan},\ and\ \citenamefont {Chuang}}]{martyn2021grand}%
  \BibitemOpen
  \bibfield  {author} {\bibinfo {author} {\bibfnamefont {J.~M.}\ \bibnamefont {Martyn}}, \bibinfo {author} {\bibfnamefont {Z.~M.}\ \bibnamefont {Rossi}}, \bibinfo {author} {\bibfnamefont {A.~K.}\ \bibnamefont {Tan}},\ and\ \bibinfo {author} {\bibfnamefont {I.~L.}\ \bibnamefont {Chuang}},\ }\href@noop {} {\bibfield  {journal} {\bibinfo  {journal} {PRX Quantum}\ }\textbf {\bibinfo {volume} {2}},\ \bibinfo {pages} {040203} (\bibinfo {year} {2021})}\BibitemShut {NoStop}%
\bibitem [{\citenamefont {Dutkiewicz}\ \emph {et~al.}(2022)\citenamefont {Dutkiewicz}, \citenamefont {Terhal},\ and\ \citenamefont {O'Brien}}]{dutkiewicz2022heisenberg}%
  \BibitemOpen
  \bibfield  {author} {\bibinfo {author} {\bibfnamefont {A.}~\bibnamefont {Dutkiewicz}}, \bibinfo {author} {\bibfnamefont {B.~M.}\ \bibnamefont {Terhal}},\ and\ \bibinfo {author} {\bibfnamefont {T.~E.}\ \bibnamefont {O'Brien}},\ }\href@noop {} {\bibfield  {journal} {\bibinfo  {journal} {Quantum}\ }\textbf {\bibinfo {volume} {6}},\ \bibinfo {pages} {830} (\bibinfo {year} {2022})}\BibitemShut {NoStop}%
\bibitem [{\citenamefont {Lin}\ and\ \citenamefont {Tong}(2022)}]{lin2022heisenberg}%
  \BibitemOpen
  \bibfield  {author} {\bibinfo {author} {\bibfnamefont {L.}~\bibnamefont {Lin}}\ and\ \bibinfo {author} {\bibfnamefont {Y.}~\bibnamefont {Tong}},\ }\href@noop {} {\bibfield  {journal} {\bibinfo  {journal} {PRX Quantum}\ }\textbf {\bibinfo {volume} {3}},\ \bibinfo {pages} {010318} (\bibinfo {year} {2022})}\BibitemShut {NoStop}%
\bibitem [{\citenamefont {Dong}\ \emph {et~al.}(2022)\citenamefont {Dong}, \citenamefont {Lin},\ and\ \citenamefont {Tong}}]{dong2022ground}%
  \BibitemOpen
  \bibfield  {author} {\bibinfo {author} {\bibfnamefont {Y.}~\bibnamefont {Dong}}, \bibinfo {author} {\bibfnamefont {L.}~\bibnamefont {Lin}},\ and\ \bibinfo {author} {\bibfnamefont {Y.}~\bibnamefont {Tong}},\ }\href@noop {} {\bibfield  {journal} {\bibinfo  {journal} {PRX Quantum}\ }\textbf {\bibinfo {volume} {3}},\ \bibinfo {pages} {040305} (\bibinfo {year} {2022})}\BibitemShut {NoStop}%
\bibitem [{\citenamefont {Li}\ \emph {et~al.}(2023)\citenamefont {Li}, \citenamefont {Ni},\ and\ \citenamefont {Ying}}]{li2023adaptive}%
  \BibitemOpen
  \bibfield  {author} {\bibinfo {author} {\bibfnamefont {H.}~\bibnamefont {Li}}, \bibinfo {author} {\bibfnamefont {H.}~\bibnamefont {Ni}},\ and\ \bibinfo {author} {\bibfnamefont {L.}~\bibnamefont {Ying}},\ }\href@noop {} {\bibfield  {journal} {\bibinfo  {journal} {Physical Review A}\ }\textbf {\bibinfo {volume} {108}},\ \bibinfo {pages} {062408} (\bibinfo {year} {2023})}\BibitemShut {NoStop}%
\bibitem [{\citenamefont {Ding}\ and\ \citenamefont {Lin}(2023)}]{ding2023even}%
  \BibitemOpen
  \bibfield  {author} {\bibinfo {author} {\bibfnamefont {Z.}~\bibnamefont {Ding}}\ and\ \bibinfo {author} {\bibfnamefont {L.}~\bibnamefont {Lin}},\ }\href@noop {} {\bibfield  {journal} {\bibinfo  {journal} {PRX Quantum}\ }\textbf {\bibinfo {volume} {4}},\ \bibinfo {pages} {020331} (\bibinfo {year} {2023})}\BibitemShut {NoStop}%
\bibitem [{\citenamefont {Rendon}\ \emph {et~al.}(2022)\citenamefont {Rendon}, \citenamefont {Izubuchi},\ and\ \citenamefont {Kikuchi}}]{rendon2022effects}%
  \BibitemOpen
  \bibfield  {author} {\bibinfo {author} {\bibfnamefont {G.}~\bibnamefont {Rendon}}, \bibinfo {author} {\bibfnamefont {T.}~\bibnamefont {Izubuchi}},\ and\ \bibinfo {author} {\bibfnamefont {Y.}~\bibnamefont {Kikuchi}},\ }\href@noop {} {\bibfield  {journal} {\bibinfo  {journal} {Physical Review D}\ }\textbf {\bibinfo {volume} {106}},\ \bibinfo {pages} {034503} (\bibinfo {year} {2022})}\BibitemShut {NoStop}%
\bibitem [{\citenamefont {Xiong}\ \emph {et~al.}(2022)\citenamefont {Xiong}, \citenamefont {Ng}, \citenamefont {Long},\ and\ \citenamefont {Hanzo}}]{xiong2022dual}%
  \BibitemOpen
  \bibfield  {author} {\bibinfo {author} {\bibfnamefont {Y.}~\bibnamefont {Xiong}}, \bibinfo {author} {\bibfnamefont {S.~X.}\ \bibnamefont {Ng}}, \bibinfo {author} {\bibfnamefont {G.-L.}\ \bibnamefont {Long}},\ and\ \bibinfo {author} {\bibfnamefont {L.}~\bibnamefont {Hanzo}},\ }\href@noop {} {\bibfield  {journal} {\bibinfo  {journal} {IEEE Signal Processing Letters}\ }\textbf {\bibinfo {volume} {29}},\ \bibinfo {pages} {1222} (\bibinfo {year} {2022})}\BibitemShut {NoStop}%
\bibitem [{\citenamefont {Chakraborty}\ \emph {et~al.}(2019)\citenamefont {Chakraborty}, \citenamefont {Gily\'{e}n},\ and\ \citenamefont {Jeffery}}]{chakraborty2018power}%
  \BibitemOpen
  \bibfield  {author} {\bibinfo {author} {\bibfnamefont {S.}~\bibnamefont {Chakraborty}}, \bibinfo {author} {\bibfnamefont {A.}~\bibnamefont {Gily\'{e}n}},\ and\ \bibinfo {author} {\bibfnamefont {S.}~\bibnamefont {Jeffery}},\ }in\ \href {https://doi.org/10.4230/LIPIcs.ICALP.2019.33} {\emph {\bibinfo {booktitle} {46th International Colloquium on Automata, Languages, and Programming (ICALP 2019)}}},\ \bibinfo {series} {Leibniz International Proceedings in Informatics (LIPIcs)}, Vol.\ \bibinfo {volume} {132},\ \bibinfo {editor} {edited by\ \bibinfo {editor} {\bibfnamefont {C.}~\bibnamefont {Baier}}, \bibinfo {editor} {\bibfnamefont {I.}~\bibnamefont {Chatzigiannakis}}, \bibinfo {editor} {\bibfnamefont {P.}~\bibnamefont {Flocchini}},\ and\ \bibinfo {editor} {\bibfnamefont {S.}~\bibnamefont {Leonardi}}}\ (\bibinfo  {publisher} {Schloss Dagstuhl -- Leibniz-Zentrum f{\"u}r Informatik},\ \bibinfo {address} {Dagstuhl, Germany},\ \bibinfo {year} {2019})\ pp.\ \bibinfo {pages} {33:1--33:14}\BibitemShut {NoStop}%
\bibitem [{\citenamefont {Babbush}\ \emph {et~al.}(2018)\citenamefont {Babbush}, \citenamefont {Gidney}, \citenamefont {Berry}, \citenamefont {Wiebe}, \citenamefont {McClean}, \citenamefont {Paler}, \citenamefont {Fowler},\ and\ \citenamefont {Neven}}]{babbush2018encoding}%
  \BibitemOpen
  \bibfield  {author} {\bibinfo {author} {\bibfnamefont {R.}~\bibnamefont {Babbush}}, \bibinfo {author} {\bibfnamefont {C.}~\bibnamefont {Gidney}}, \bibinfo {author} {\bibfnamefont {D.~W.}\ \bibnamefont {Berry}}, \bibinfo {author} {\bibfnamefont {N.}~\bibnamefont {Wiebe}}, \bibinfo {author} {\bibfnamefont {J.}~\bibnamefont {McClean}}, \bibinfo {author} {\bibfnamefont {A.}~\bibnamefont {Paler}}, \bibinfo {author} {\bibfnamefont {A.}~\bibnamefont {Fowler}},\ and\ \bibinfo {author} {\bibfnamefont {H.}~\bibnamefont {Neven}},\ }\href@noop {} {\bibfield  {journal} {\bibinfo  {journal} {Physical Review X}\ }\textbf {\bibinfo {volume} {8}},\ \bibinfo {pages} {041015} (\bibinfo {year} {2018})}\BibitemShut {NoStop}%
\bibitem [{\citenamefont {von Burg}\ \emph {et~al.}(2021)\citenamefont {von Burg}, \citenamefont {Low}, \citenamefont {H{\"a}ner}, \citenamefont {Steiger}, \citenamefont {Reiher}, \citenamefont {Roetteler},\ and\ \citenamefont {Troyer}}]{von2021quantum}%
  \BibitemOpen
  \bibfield  {author} {\bibinfo {author} {\bibfnamefont {V.}~\bibnamefont {von Burg}}, \bibinfo {author} {\bibfnamefont {G.~H.}\ \bibnamefont {Low}}, \bibinfo {author} {\bibfnamefont {T.}~\bibnamefont {H{\"a}ner}}, \bibinfo {author} {\bibfnamefont {D.~S.}\ \bibnamefont {Steiger}}, \bibinfo {author} {\bibfnamefont {M.}~\bibnamefont {Reiher}}, \bibinfo {author} {\bibfnamefont {M.}~\bibnamefont {Roetteler}},\ and\ \bibinfo {author} {\bibfnamefont {M.}~\bibnamefont {Troyer}},\ }\href@noop {} {\bibfield  {journal} {\bibinfo  {journal} {Physical Review Research}\ }\textbf {\bibinfo {volume} {3}},\ \bibinfo {pages} {033055} (\bibinfo {year} {2021})}\BibitemShut {NoStop}%
\bibitem [{\citenamefont {Lee}\ \emph {et~al.}(2021)\citenamefont {Lee}, \citenamefont {Berry}, \citenamefont {Gidney}, \citenamefont {Huggins}, \citenamefont {McClean}, \citenamefont {Wiebe},\ and\ \citenamefont {Babbush}}]{lee2021even}%
  \BibitemOpen
  \bibfield  {author} {\bibinfo {author} {\bibfnamefont {J.}~\bibnamefont {Lee}}, \bibinfo {author} {\bibfnamefont {D.~W.}\ \bibnamefont {Berry}}, \bibinfo {author} {\bibfnamefont {C.}~\bibnamefont {Gidney}}, \bibinfo {author} {\bibfnamefont {W.~J.}\ \bibnamefont {Huggins}}, \bibinfo {author} {\bibfnamefont {J.~R.}\ \bibnamefont {McClean}}, \bibinfo {author} {\bibfnamefont {N.}~\bibnamefont {Wiebe}},\ and\ \bibinfo {author} {\bibfnamefont {R.}~\bibnamefont {Babbush}},\ }\href@noop {} {\bibfield  {journal} {\bibinfo  {journal} {PRX Quantum}\ }\textbf {\bibinfo {volume} {2}},\ \bibinfo {pages} {030305} (\bibinfo {year} {2021})}\BibitemShut {NoStop}%
\bibitem [{\citenamefont {Motta}\ \emph {et~al.}(2021)\citenamefont {Motta}, \citenamefont {Ye}, \citenamefont {McClean}, \citenamefont {Li}, \citenamefont {Minnich}, \citenamefont {Babbush},\ and\ \citenamefont {Chan}}]{motta2021low}%
  \BibitemOpen
  \bibfield  {author} {\bibinfo {author} {\bibfnamefont {M.}~\bibnamefont {Motta}}, \bibinfo {author} {\bibfnamefont {E.}~\bibnamefont {Ye}}, \bibinfo {author} {\bibfnamefont {J.~R.}\ \bibnamefont {McClean}}, \bibinfo {author} {\bibfnamefont {Z.}~\bibnamefont {Li}}, \bibinfo {author} {\bibfnamefont {A.~J.}\ \bibnamefont {Minnich}}, \bibinfo {author} {\bibfnamefont {R.}~\bibnamefont {Babbush}},\ and\ \bibinfo {author} {\bibfnamefont {G.~K.-L.}\ \bibnamefont {Chan}},\ }\href@noop {} {\bibfield  {journal} {\bibinfo  {journal} {npj Quantum Information}\ }\textbf {\bibinfo {volume} {7}},\ \bibinfo {pages} {83} (\bibinfo {year} {2021})}\BibitemShut {NoStop}%
\bibitem [{\citenamefont {Cohn}\ \emph {et~al.}(2021)\citenamefont {Cohn}, \citenamefont {Motta},\ and\ \citenamefont {Parrish}}]{cohn2021quantum}%
  \BibitemOpen
  \bibfield  {author} {\bibinfo {author} {\bibfnamefont {J.}~\bibnamefont {Cohn}}, \bibinfo {author} {\bibfnamefont {M.}~\bibnamefont {Motta}},\ and\ \bibinfo {author} {\bibfnamefont {R.~M.}\ \bibnamefont {Parrish}},\ }\href@noop {} {\bibfield  {journal} {\bibinfo  {journal} {PRX Quantum}\ }\textbf {\bibinfo {volume} {2}},\ \bibinfo {pages} {040352} (\bibinfo {year} {2021})}\BibitemShut {NoStop}%
\bibitem [{\citenamefont {Oumarou}\ \emph {et~al.}(2022)\citenamefont {Oumarou}, \citenamefont {Scheurer}, \citenamefont {Parrish}, \citenamefont {Hohenstein},\ and\ \citenamefont {Gogolin}}]{oumarou2022accelerating}%
  \BibitemOpen
  \bibfield  {author} {\bibinfo {author} {\bibfnamefont {O.}~\bibnamefont {Oumarou}}, \bibinfo {author} {\bibfnamefont {M.}~\bibnamefont {Scheurer}}, \bibinfo {author} {\bibfnamefont {R.~M.}\ \bibnamefont {Parrish}}, \bibinfo {author} {\bibfnamefont {E.~G.}\ \bibnamefont {Hohenstein}},\ and\ \bibinfo {author} {\bibfnamefont {C.}~\bibnamefont {Gogolin}},\ }\href@noop {} {\bibfield  {journal} {\bibinfo  {journal} {arXiv preprint arXiv:2212.07957}\ } (\bibinfo {year} {2022})}\BibitemShut {NoStop}%
\bibitem [{\citenamefont {Dunlap}\ \emph {et~al.}(1979)\citenamefont {Dunlap}, \citenamefont {Connolly},\ and\ \citenamefont {Sabin}}]{dunlap1979some}%
  \BibitemOpen
  \bibfield  {author} {\bibinfo {author} {\bibfnamefont {B.~I.}\ \bibnamefont {Dunlap}}, \bibinfo {author} {\bibfnamefont {J.}~\bibnamefont {Connolly}},\ and\ \bibinfo {author} {\bibfnamefont {J.}~\bibnamefont {Sabin}},\ }\href@noop {} {\bibfield  {journal} {\bibinfo  {journal} {The Journal of Chemical Physics}\ }\textbf {\bibinfo {volume} {71}},\ \bibinfo {pages} {3396} (\bibinfo {year} {1979})}\BibitemShut {NoStop}%
\bibitem [{\citenamefont {Werner}\ \emph {et~al.}(2003)\citenamefont {Werner}, \citenamefont {Manby},\ and\ \citenamefont {Knowles}}]{werner2003fast}%
  \BibitemOpen
  \bibfield  {author} {\bibinfo {author} {\bibfnamefont {H.-J.}\ \bibnamefont {Werner}}, \bibinfo {author} {\bibfnamefont {F.~R.}\ \bibnamefont {Manby}},\ and\ \bibinfo {author} {\bibfnamefont {P.~J.}\ \bibnamefont {Knowles}},\ }\href@noop {} {\bibfield  {journal} {\bibinfo  {journal} {The Journal of chemical physics}\ }\textbf {\bibinfo {volume} {118}},\ \bibinfo {pages} {8149} (\bibinfo {year} {2003})}\BibitemShut {NoStop}%
\bibitem [{\citenamefont {Pedersen}\ \emph {et~al.}(2009)\citenamefont {Pedersen}, \citenamefont {Aquilante},\ and\ \citenamefont {Lindh}}]{pedersen2009density}%
  \BibitemOpen
  \bibfield  {author} {\bibinfo {author} {\bibfnamefont {T.~B.}\ \bibnamefont {Pedersen}}, \bibinfo {author} {\bibfnamefont {F.}~\bibnamefont {Aquilante}},\ and\ \bibinfo {author} {\bibfnamefont {R.}~\bibnamefont {Lindh}},\ }\href@noop {} {\bibfield  {journal} {\bibinfo  {journal} {Theoretical Chemistry Accounts}\ }\textbf {\bibinfo {volume} {124}},\ \bibinfo {pages} {1} (\bibinfo {year} {2009})}\BibitemShut {NoStop}%
\bibitem [{\citenamefont {Hohenstein}\ and\ \citenamefont {Sherrill}(2010)}]{hohenstein2010density}%
  \BibitemOpen
  \bibfield  {author} {\bibinfo {author} {\bibfnamefont {E.~G.}\ \bibnamefont {Hohenstein}}\ and\ \bibinfo {author} {\bibfnamefont {C.~D.}\ \bibnamefont {Sherrill}},\ }\href@noop {} {\bibfield  {journal} {\bibinfo  {journal} {The Journal of chemical physics}\ }\textbf {\bibinfo {volume} {133}} (\bibinfo {year} {2010})}\BibitemShut {NoStop}%
\bibitem [{\citenamefont {Parrish}\ \emph {et~al.}(2013)\citenamefont {Parrish}, \citenamefont {Hohenstein}, \citenamefont {Schunck}, \citenamefont {Sherrill},\ and\ \citenamefont {Mart{\'\i}nez}}]{parrish2013exact}%
  \BibitemOpen
  \bibfield  {author} {\bibinfo {author} {\bibfnamefont {R.~M.}\ \bibnamefont {Parrish}}, \bibinfo {author} {\bibfnamefont {E.~G.}\ \bibnamefont {Hohenstein}}, \bibinfo {author} {\bibfnamefont {N.~F.}\ \bibnamefont {Schunck}}, \bibinfo {author} {\bibfnamefont {C.~D.}\ \bibnamefont {Sherrill}},\ and\ \bibinfo {author} {\bibfnamefont {T.~J.}\ \bibnamefont {Mart{\'\i}nez}},\ }\href@noop {} {\bibfield  {journal} {\bibinfo  {journal} {Physical Review Letters}\ }\textbf {\bibinfo {volume} {111}},\ \bibinfo {pages} {132505} (\bibinfo {year} {2013})}\BibitemShut {NoStop}%
\bibitem [{\citenamefont {Lee}\ \emph {et~al.}(2019)\citenamefont {Lee}, \citenamefont {Lin},\ and\ \citenamefont {Head-Gordon}}]{lee2019systematically}%
  \BibitemOpen
  \bibfield  {author} {\bibinfo {author} {\bibfnamefont {J.}~\bibnamefont {Lee}}, \bibinfo {author} {\bibfnamefont {L.}~\bibnamefont {Lin}},\ and\ \bibinfo {author} {\bibfnamefont {M.}~\bibnamefont {Head-Gordon}},\ }\href@noop {} {\bibfield  {journal} {\bibinfo  {journal} {Journal of chemical theory and computation}\ }\textbf {\bibinfo {volume} {16}},\ \bibinfo {pages} {243} (\bibinfo {year} {2019})}\BibitemShut {NoStop}%
\bibitem [{\citenamefont {Rubin}\ \emph {et~al.}(2022)\citenamefont {Rubin}, \citenamefont {Lee},\ and\ \citenamefont {Babbush}}]{rubin2022compressing}%
  \BibitemOpen
  \bibfield  {author} {\bibinfo {author} {\bibfnamefont {N.~C.}\ \bibnamefont {Rubin}}, \bibinfo {author} {\bibfnamefont {J.}~\bibnamefont {Lee}},\ and\ \bibinfo {author} {\bibfnamefont {R.}~\bibnamefont {Babbush}},\ }\href@noop {} {\bibfield  {journal} {\bibinfo  {journal} {Journal of Chemical Theory and Computation}\ }\textbf {\bibinfo {volume} {18}},\ \bibinfo {pages} {1480} (\bibinfo {year} {2022})}\BibitemShut {NoStop}%
\bibitem [{\citenamefont {Goings}\ \emph {et~al.}(2022)\citenamefont {Goings}, \citenamefont {White}, \citenamefont {Lee}, \citenamefont {Tautermann}, \citenamefont {Degroote}, \citenamefont {Gidney}, \citenamefont {Shiozaki}, \citenamefont {Babbush},\ and\ \citenamefont {Rubin}}]{goings2022reliably}%
  \BibitemOpen
  \bibfield  {author} {\bibinfo {author} {\bibfnamefont {J.~J.}\ \bibnamefont {Goings}}, \bibinfo {author} {\bibfnamefont {A.}~\bibnamefont {White}}, \bibinfo {author} {\bibfnamefont {J.}~\bibnamefont {Lee}}, \bibinfo {author} {\bibfnamefont {C.~S.}\ \bibnamefont {Tautermann}}, \bibinfo {author} {\bibfnamefont {M.}~\bibnamefont {Degroote}}, \bibinfo {author} {\bibfnamefont {C.}~\bibnamefont {Gidney}}, \bibinfo {author} {\bibfnamefont {T.}~\bibnamefont {Shiozaki}}, \bibinfo {author} {\bibfnamefont {R.}~\bibnamefont {Babbush}},\ and\ \bibinfo {author} {\bibfnamefont {N.~C.}\ \bibnamefont {Rubin}},\ }\href@noop {} {\bibfield  {journal} {\bibinfo  {journal} {Proceedings of the National Academy of Sciences}\ }\textbf {\bibinfo {volume} {119}},\ \bibinfo {pages} {e2203533119} (\bibinfo {year} {2022})}\BibitemShut {NoStop}%
\bibitem [{\citenamefont {Loaiza}\ \emph {et~al.}(2023)\citenamefont {Loaiza}, \citenamefont {Khah}, \citenamefont {Wiebe},\ and\ \citenamefont {Izmaylov}}]{loaiza2023reducing}%
  \BibitemOpen
  \bibfield  {author} {\bibinfo {author} {\bibfnamefont {I.}~\bibnamefont {Loaiza}}, \bibinfo {author} {\bibfnamefont {A.~M.}\ \bibnamefont {Khah}}, \bibinfo {author} {\bibfnamefont {N.}~\bibnamefont {Wiebe}},\ and\ \bibinfo {author} {\bibfnamefont {A.~F.}\ \bibnamefont {Izmaylov}},\ }\href@noop {} {\bibfield  {journal} {\bibinfo  {journal} {Quantum Science and Technology}\ }\textbf {\bibinfo {volume} {8}},\ \bibinfo {pages} {035019} (\bibinfo {year} {2023})}\BibitemShut {NoStop}%
\bibitem [{\citenamefont {Loaiza}\ and\ \citenamefont {Izmaylov}(2023)}]{loaiza2023block}%
  \BibitemOpen
  \bibfield  {author} {\bibinfo {author} {\bibfnamefont {I.}~\bibnamefont {Loaiza}}\ and\ \bibinfo {author} {\bibfnamefont {A.~F.}\ \bibnamefont {Izmaylov}},\ }\href@noop {} {\bibfield  {journal} {\bibinfo  {journal} {Journal of Chemical Theory and Computation}\ } (\bibinfo {year} {2023})}\BibitemShut {NoStop}%
\bibitem [{\citenamefont {Ivanov}\ \emph {et~al.}(2023)\citenamefont {Ivanov}, \citenamefont {S{\"u}nderhauf}, \citenamefont {Holzmann}, \citenamefont {Ellaby}, \citenamefont {Kerber}, \citenamefont {Jones},\ and\ \citenamefont {Camps}}]{ivanov2023quantum}%
  \BibitemOpen
  \bibfield  {author} {\bibinfo {author} {\bibfnamefont {A.~V.}\ \bibnamefont {Ivanov}}, \bibinfo {author} {\bibfnamefont {C.}~\bibnamefont {S{\"u}nderhauf}}, \bibinfo {author} {\bibfnamefont {N.}~\bibnamefont {Holzmann}}, \bibinfo {author} {\bibfnamefont {T.}~\bibnamefont {Ellaby}}, \bibinfo {author} {\bibfnamefont {R.~N.}\ \bibnamefont {Kerber}}, \bibinfo {author} {\bibfnamefont {G.}~\bibnamefont {Jones}},\ and\ \bibinfo {author} {\bibfnamefont {J.}~\bibnamefont {Camps}},\ }\href@noop {} {\bibfield  {journal} {\bibinfo  {journal} {Physical Review Research}\ }\textbf {\bibinfo {volume} {5}},\ \bibinfo {pages} {013200} (\bibinfo {year} {2023})}\BibitemShut {NoStop}%
\bibitem [{\citenamefont {Rubin}\ \emph {et~al.}(2023)\citenamefont {Rubin}, \citenamefont {Berry}, \citenamefont {Malone}, \citenamefont {White}, \citenamefont {Khattar}, \citenamefont {DePrince~III}, \citenamefont {Sicolo}, \citenamefont {K{\"u}ehn}, \citenamefont {Kaicher}, \citenamefont {Lee} \emph {et~al.}}]{rubin2023fault}%
  \BibitemOpen
  \bibfield  {author} {\bibinfo {author} {\bibfnamefont {N.~C.}\ \bibnamefont {Rubin}}, \bibinfo {author} {\bibfnamefont {D.~W.}\ \bibnamefont {Berry}}, \bibinfo {author} {\bibfnamefont {F.~D.}\ \bibnamefont {Malone}}, \bibinfo {author} {\bibfnamefont {A.~F.}\ \bibnamefont {White}}, \bibinfo {author} {\bibfnamefont {T.}~\bibnamefont {Khattar}}, \bibinfo {author} {\bibfnamefont {A.~E.}\ \bibnamefont {DePrince~III}}, \bibinfo {author} {\bibfnamefont {S.}~\bibnamefont {Sicolo}}, \bibinfo {author} {\bibfnamefont {M.}~\bibnamefont {K{\"u}ehn}}, \bibinfo {author} {\bibfnamefont {M.}~\bibnamefont {Kaicher}}, \bibinfo {author} {\bibfnamefont {J.}~\bibnamefont {Lee}}, \emph {et~al.},\ }\href@noop {} {\bibfield  {journal} {\bibinfo  {journal} {PRX Quantum}\ }\textbf {\bibinfo {volume} {4}},\ \bibinfo {pages} {040303} (\bibinfo {year} {2023})}\BibitemShut {NoStop}%
\bibitem [{\citenamefont {Koridon}\ \emph {et~al.}(2021)\citenamefont {Koridon}, \citenamefont {Yalouz}, \citenamefont {Senjean}, \citenamefont {Buda}, \citenamefont {O'Brien},\ and\ \citenamefont {Visscher}}]{koridon2021orbital}%
  \BibitemOpen
  \bibfield  {author} {\bibinfo {author} {\bibfnamefont {E.}~\bibnamefont {Koridon}}, \bibinfo {author} {\bibfnamefont {S.}~\bibnamefont {Yalouz}}, \bibinfo {author} {\bibfnamefont {B.}~\bibnamefont {Senjean}}, \bibinfo {author} {\bibfnamefont {F.}~\bibnamefont {Buda}}, \bibinfo {author} {\bibfnamefont {T.~E.}\ \bibnamefont {O'Brien}},\ and\ \bibinfo {author} {\bibfnamefont {L.}~\bibnamefont {Visscher}},\ }\href@noop {} {\bibfield  {journal} {\bibinfo  {journal} {Physical Review Research}\ }\textbf {\bibinfo {volume} {3}},\ \bibinfo {pages} {033127} (\bibinfo {year} {2021})}\BibitemShut {NoStop}%
\bibitem [{\citenamefont {Reiher}\ \emph {et~al.}(2017)\citenamefont {Reiher}, \citenamefont {Wiebe}, \citenamefont {Svore}, \citenamefont {Wecker},\ and\ \citenamefont {Troyer}}]{reiher2017elucidating}%
  \BibitemOpen
  \bibfield  {author} {\bibinfo {author} {\bibfnamefont {M.}~\bibnamefont {Reiher}}, \bibinfo {author} {\bibfnamefont {N.}~\bibnamefont {Wiebe}}, \bibinfo {author} {\bibfnamefont {K.~M.}\ \bibnamefont {Svore}}, \bibinfo {author} {\bibfnamefont {D.}~\bibnamefont {Wecker}},\ and\ \bibinfo {author} {\bibfnamefont {M.}~\bibnamefont {Troyer}},\ }\href@noop {} {\bibfield  {journal} {\bibinfo  {journal} {Proceedings of the national academy of sciences}\ }\textbf {\bibinfo {volume} {114}},\ \bibinfo {pages} {7555} (\bibinfo {year} {2017})}\BibitemShut {NoStop}%
\bibitem [{\citenamefont {Li}\ \emph {et~al.}(2019)\citenamefont {Li}, \citenamefont {Li}, \citenamefont {Dattani}, \citenamefont {Umrigar},\ and\ \citenamefont {Chan}}]{li2019electronic}%
  \BibitemOpen
  \bibfield  {author} {\bibinfo {author} {\bibfnamefont {Z.}~\bibnamefont {Li}}, \bibinfo {author} {\bibfnamefont {J.}~\bibnamefont {Li}}, \bibinfo {author} {\bibfnamefont {N.~S.}\ \bibnamefont {Dattani}}, \bibinfo {author} {\bibfnamefont {C.}~\bibnamefont {Umrigar}},\ and\ \bibinfo {author} {\bibfnamefont {G.~K.}\ \bibnamefont {Chan}},\ }\href@noop {} {\bibfield  {journal} {\bibinfo  {journal} {The Journal of chemical physics}\ }\textbf {\bibinfo {volume} {150}} (\bibinfo {year} {2019})}\BibitemShut {NoStop}%
\bibitem [{\citenamefont {Cortes}\ \emph {et~al.}(2024)\citenamefont {Cortes}, \citenamefont {Loipersberger}, \citenamefont {Parrish}, \citenamefont {Morley-Short}, \citenamefont {Pol}, \citenamefont {Sim}, \citenamefont {Steudtner}, \citenamefont {Tautermann}, \citenamefont {Degroote}, \citenamefont {Moll}, \citenamefont {Santagati},\ and\ \citenamefont {Streif}}]{cortes2023fault}%
  \BibitemOpen
  \bibfield  {author} {\bibinfo {author} {\bibfnamefont {C.~L.}\ \bibnamefont {Cortes}}, \bibinfo {author} {\bibfnamefont {M.}~\bibnamefont {Loipersberger}}, \bibinfo {author} {\bibfnamefont {R.~M.}\ \bibnamefont {Parrish}}, \bibinfo {author} {\bibfnamefont {S.}~\bibnamefont {Morley-Short}}, \bibinfo {author} {\bibfnamefont {W.}~\bibnamefont {Pol}}, \bibinfo {author} {\bibfnamefont {S.}~\bibnamefont {Sim}}, \bibinfo {author} {\bibfnamefont {M.}~\bibnamefont {Steudtner}}, \bibinfo {author} {\bibfnamefont {C.~S.}\ \bibnamefont {Tautermann}}, \bibinfo {author} {\bibfnamefont {M.}~\bibnamefont {Degroote}}, \bibinfo {author} {\bibfnamefont {N.}~\bibnamefont {Moll}}, \bibinfo {author} {\bibfnamefont {R.}~\bibnamefont {Santagati}},\ and\ \bibinfo {author} {\bibfnamefont {M.}~\bibnamefont {Streif}},\ }\href {https://doi.org/10.1103/PRXQuantum.5.010336} {\bibfield  {journal} {\bibinfo  {journal} {PRX Quantum}\ }\textbf {\bibinfo {volume} {5}},\ \bibinfo {pages} {010336} (\bibinfo {year} {2024})}\BibitemShut {NoStop}%
\bibitem [{\citenamefont {Su}\ \emph {et~al.}(2021)\citenamefont {Su}, \citenamefont {Berry}, \citenamefont {Wiebe}, \citenamefont {Rubin},\ and\ \citenamefont {Babbush}}]{su2021fault}%
  \BibitemOpen
  \bibfield  {author} {\bibinfo {author} {\bibfnamefont {Y.}~\bibnamefont {Su}}, \bibinfo {author} {\bibfnamefont {D.~W.}\ \bibnamefont {Berry}}, \bibinfo {author} {\bibfnamefont {N.}~\bibnamefont {Wiebe}}, \bibinfo {author} {\bibfnamefont {N.}~\bibnamefont {Rubin}},\ and\ \bibinfo {author} {\bibfnamefont {R.}~\bibnamefont {Babbush}},\ }\href@noop {} {\bibfield  {journal} {\bibinfo  {journal} {PRX Quantum}\ }\textbf {\bibinfo {volume} {2}},\ \bibinfo {pages} {040332} (\bibinfo {year} {2021})}\BibitemShut {NoStop}%
\bibitem [{\citenamefont {Fulton}\ and\ \citenamefont {Harris}(2013)}]{fulton2013representation}%
  \BibitemOpen
  \bibfield  {author} {\bibinfo {author} {\bibfnamefont {W.}~\bibnamefont {Fulton}}\ and\ \bibinfo {author} {\bibfnamefont {J.}~\bibnamefont {Harris}},\ }\href@noop {} {\emph {\bibinfo {title} {Representation theory: a first course}}},\ Vol.\ \bibinfo {volume} {129}\ (\bibinfo  {publisher} {Springer Science \& Business Media},\ \bibinfo {year} {2013})\BibitemShut {NoStop}%
\bibitem [{\citenamefont {Sagan}(2013)}]{sagan2013symmetric}%
  \BibitemOpen
  \bibfield  {author} {\bibinfo {author} {\bibfnamefont {B.~E.}\ \bibnamefont {Sagan}},\ }\href@noop {} {\emph {\bibinfo {title} {The symmetric group: representations, combinatorial algorithms, and symmetric functions}}},\ Vol.\ \bibinfo {volume} {203}\ (\bibinfo  {publisher} {Springer Science \& Business Media},\ \bibinfo {year} {2013})\BibitemShut {NoStop}%
\bibitem [{\citenamefont {Babbush}\ \emph {et~al.}(2019)\citenamefont {Babbush}, \citenamefont {Berry}, \citenamefont {McClean},\ and\ \citenamefont {Neven}}]{babbush2019quantum}%
  \BibitemOpen
  \bibfield  {author} {\bibinfo {author} {\bibfnamefont {R.}~\bibnamefont {Babbush}}, \bibinfo {author} {\bibfnamefont {D.~W.}\ \bibnamefont {Berry}}, \bibinfo {author} {\bibfnamefont {J.~R.}\ \bibnamefont {McClean}},\ and\ \bibinfo {author} {\bibfnamefont {H.}~\bibnamefont {Neven}},\ }\href@noop {} {\bibfield  {journal} {\bibinfo  {journal} {npj Quantum Information}\ }\textbf {\bibinfo {volume} {5}},\ \bibinfo {pages} {92} (\bibinfo {year} {2019})}\BibitemShut {NoStop}%
\bibitem [{\citenamefont {McArdle}\ \emph {et~al.}(2022)\citenamefont {McArdle}, \citenamefont {Campbell},\ and\ \citenamefont {Su}}]{mcardle2022exploiting}%
  \BibitemOpen
  \bibfield  {author} {\bibinfo {author} {\bibfnamefont {S.}~\bibnamefont {McArdle}}, \bibinfo {author} {\bibfnamefont {E.}~\bibnamefont {Campbell}},\ and\ \bibinfo {author} {\bibfnamefont {Y.}~\bibnamefont {Su}},\ }\href@noop {} {\bibfield  {journal} {\bibinfo  {journal} {Physical Review A}\ }\textbf {\bibinfo {volume} {105}},\ \bibinfo {pages} {012403} (\bibinfo {year} {2022})}\BibitemShut {NoStop}%
\end{thebibliography}%
\bibliographystyle{apsrev4-2}

\onecolumngrid
\newpage

\appendix

\setcounter{equation}{0}

\renewcommand\theequation{A.\arabic{equation}}

\section{Symmetry Sector Scaling Analysis}

\subsection{Linear Combination of Hermitian Operators}

In this section, we derive the analytical scaling behavior of the symmetry-sector bound $\Delta_\mu/2$. For generality, we consider a Hermitian operator $\hat{H}$ that is expressed as a sum of Hermitian operators $\hat{A}_i$, such that $\hat{H}=\sum_i \hat{A}_i$, where it is assumed that each Hermitian operator is further decomposed as a sum of unitaries, $\hat{A}_i = \sum_k c_k^{(i)}\hat{U}^{(i)}_k$, with complex coefficients, $c_k^{(i)}$. By using Weyl's inequalities for extremal eigenvalues, it is possible to show that:
\begin{equation}
    \tfrac{1}{2}\Delta \leq  \sum_i \tfrac{1}{2}\Delta^{(i)},
    \label{incoherent_bound}
\end{equation}
where $\Delta/2$ is the total spectral bound for $\hat{H}$ and $\Delta^{(i)}/2$ is the spectral bound for $\hat{A}_i$. Assuming that $\hat{H}$ and $\hat{A}_i$ obey the same symmetries (for instance, defined by the symmetry transformation $\hat{U}_S$ such that $[\hat{H},U_S]=0$ and $[\hat{A}_i,U_S]=0$), it is then straightforward to show that a similar inequality exists within a specific symmetry sector $\mu$,
\begin{equation}
    \tfrac{1}{2}\Delta_\mu \leq  \sum_i \tfrac{1}{2}\Delta^{(i)}_\mu.
    \label{incoherent_symmetry_sector_bound}
\end{equation}
These inequalities define the relations between the so-called coherent spectral bounds ($\Delta/2$ and $\Delta_\mu/2$) and incoherent spectral bounds ($\sum_i \tfrac{1}{2}\Delta^{(i)}$ and $\sum_i \tfrac{1}{2}\Delta_\mu^{(i)}$) discussed in the main text. We elaborate on this point within the context of the electronic structure Hamiltonian in the following subsection.

\subsection{Application: electronic structure Hamiltonian}
To elucidate the scaling behavior of the electronic structure Hamiltonian, we use the standard Majorana representation \cite{von2021quantum,koridon2021orbital},
\begin{align}
    \hat{H} = \mathcal{E}_o + H_\mathrm{1body} + H_\mathrm{2body},
\end{align}
where the \emph{1-body} and \emph{2-body} Hamiltonian terms are written as,
\begin{align}
    H_\mathrm{1body} &= \frac{i}{2} \sum_{\substack{pq\\\sigma}}\kappa_{pq}\hat{\gamma}_{p\sigma,0}\hat{\gamma}_{q\sigma,1}, \label{1body} \\
    H_\mathrm{2body} &= - \frac{1}{8}\sum_{\substack{pqrs\\\sigma\tau}} g_{pqrs} \hat{\gamma}_{p\sigma,0}\hat{\gamma}_{q\sigma,1} \hat{\gamma}_{r\tau,0}\hat{\gamma}_{s\tau,1},
\end{align}
and the core energy ($\mathcal{E}_o$) and the dressed one-electron integral matrix ($\kappa_{pq}$) are given by,
\begin{align}
    \mathcal{E}_o &=  E_\mathrm{nuc}+\sum_p h_{pp} + \tfrac{1}{2}\sum_{pr}g_{pprr} - \tfrac{1}{2}\sum_{pr} g_{prrp}, \\
    \kappa_{pq} &= h_{pq}-\frac{1}{2}\sum_r g_{prrq} + \sum_r g_{pqrr} .
    \label{2body}
\end{align}
The standard LCU implementation of the electronic structure Hamiltonian proceeds by finding LCU decompositions of $H_\mathrm{1body}$ and $H_\mathrm{2body}$ independently. This is typically achieved by directly using the Jordan-Wigner transformation of the Majorana operators which leads to the so-called \emph{sparse} encoding technique of the electronic structure Hamiltonian. In the sparse technique, only the Hamiltonian matrix elements larger than a certain threshold are prepared on the quantum computer \cite{babbush2019quantum,lee2021even}. Alternatively, tensor compression techniques such as \emph{double factorization} or \emph{tensor hypercontraction} factorize $\kappa_{pq}$ and $g_{pqrs}$ independently, before performing a Jordan-Wigner transformation in the final step \cite{von2021quantum,lee2021even}. In either case, both of these approaches will be bounded by the incoherent spectral gap, $\Delta_\mathrm{incoh.}/2$, which obeys the inequality,
\begin{equation}
    \tfrac{1}{2}\Delta \leq \tfrac{1}{2}\Delta_\mathrm{incoh.} \equiv \tfrac{1}{2}\Delta^\mathrm{1body}+\tfrac{1}{2}\Delta^\mathrm{2body},
\end{equation}
referenced in the main text. This inequality also holds within each symmetry sector, $\mu$, given that $H_\mathrm{1body}$ and $H_\mathrm{2body}$ obey the same symmetries as $\hat{H}$. In the following, we proceed to find the symmetry-sector scaling behavior of the 1-body ($\Delta^\mathrm{1body}/2$) and 2-body ($\Delta^\mathrm{2body}/2$) Hamiltonian spectral gaps. 

\subsubsection{1-body Hamiltonian}
The 1-body spectral gap can be determined exactly. The result is obtained by using the eigen-decomposition of the one-electron integral matrix,
\begin{equation}
    \kappa_{pq} = \sum_k \lambda_k U_{pk}U_{qk}.
\end{equation}
This leads to the following expression for the 1-body Hamiltonian,
\begin{equation}
    H_{\mathrm{1body}} = \frac{i}{2}\sum_{\substack{pq\\ \sigma}}\kappa_{pq}\hat{\gamma}_{p\sigma,0} \hat{\gamma}_{q\sigma,1} = \frac{i}{2}\hat{G}^\dagger\left(\sum_{k,\sigma} \lambda_k \hat{\gamma}^\dagger_{k\sigma,0} \hat{\gamma}_{k\sigma,1} \right) \hat{G},
\end{equation}
where $\hat{G} =  \exp\left(\sum_{p,q,\sigma}[\log \mathbf{U}]_{pq} \hat{a}^{\dagger}_{p\sigma}\hat{a}_{q\sigma}  \right)$ is the unitary orbital rotation (Givens) operator. Assuming the eigenvalues $\lambda_k$ are ordered from smallest to largest, the symmetry-sector bound of the 1-body Hamiltonian with fixed particle number, $\mu=(\eta_\alpha,\eta_\beta$), is given by:
\begin{align}
    \frac{1}{2}\Delta_\mu^{\mathrm{1body}} = \frac{1}{2}\left( \sum_{\substack{k=N_\mathrm{orb}-\eta_\sigma \\ \sigma}}^{N_\mathrm{orb}-1} \!\!\!\! \lambda_k - \sum_{ \substack{k=0 \\ \sigma} }^{\eta_\sigma-1} \lambda_k \right).
    \label{1body_bound}
\end{align}
As a point of reference, we also compute the fermionic semi-norm, $\| \hat{X}  \|_\mu :=\max_{\ket{\psi_\mu}} |\braket{\psi_\mu|\hat{X}|\psi_\mu} |$, defined for a general Hermitian operator $\hat{X}$ \cite{mcardle2022exploiting}. For the 1-body Hamiltonian, we find:
\begin{equation}
    \|H_{\mathrm{1body}} \|_\mu =  \sum_\sigma \text{max}\left[ \Bigg|\frac{1}{2}\!\!\sum_{k=0}^{N_\mathrm{orb}-1} \!\!\!\lambda_k- \sum_{ \substack{k=0 } }^{\eta_\sigma-1} \lambda_k\Bigg| ,\Bigg| \frac{1}{2}\!\!\sum_{k=0}^{N_\mathrm{orb}-1} \!\!\!\lambda_k - \!\!\!\!\!\!\sum_{\substack{k=N_\mathrm{orb}-\eta_\sigma }}^{N_\mathrm{orb}-1} \!\!\!\! \lambda_k \Bigg|\right]  ,
    \label{1body_norm}
\end{equation}
which is equal to the largest singular value of $H_\mathrm{1body}$ in the symmetry sector $\mu=(\eta_\alpha,\eta_\beta$). The symmetry sector spectral gap \eqref{1body_bound} is a lower bound to the fermionic semi-norm, $\frac{1}{2}\Delta_\mu^{\mathrm{1body}} \leq \| H_{\mathrm{1body}} \|_\mu$, which is not saturated for the 1-body Hamiltonian expressed in the form above. It is important to note that these results provide a simple and direct way of computing the spectral gaps exactly for 1-body Hamiltonian based on $\{\hat{N},\hat{S}_z\}$ symmetries. To elucidate the scaling behavior, we use the following bound
\begin{equation}
    \| H_{\mathrm{1body}} \|_\mu \leq \|\kappa_{pq}\| (\eta_\alpha + \eta_\beta) = \text{max}[|\Lambda|]\eta,
\end{equation}
where $\Lambda = \{\lambda_1,\lambda_2,\cdots,\lambda_{N_\mathrm{orb}}\}$ denotes the set of eigenvalues of the one-electron integral matrix. This shows that the symmetry sector gap will be strictly upper bounded by the largest absolute eigenvalue of the one-electron integral matrix as well as the total number of electrons, $\eta = \eta_\alpha + \eta_\beta$.

\subsubsection{2-body Hamiltonian}
To derive the analytical scaling behavior of the two-body Hamiltonian, we use the \emph{double factorized} representation of the four-index tensor \cite{von2021quantum,cohn2021quantum},
\begin{equation}
    g_{pqrs} = \sum_{tkl} \alpha^t_k\alpha ^t_l \;U^{(t)}_{pk}U^{(t)}_{qk}U^{(t)}_{rl}U^{(t)}_{sl},
\end{equation}
which transforms the two-body Hamiltonian \eqref{2body} into the complete-square form,
\begin{equation}
    H_\mathrm{2body} = \frac{1}{2}\sum_t \hat{v}^2_t %= \frac{1}{8}\sum_{\substack{tkl\\\sigma\tau}}\gamma_k^t\gamma_l^t \hat{G}_t^\dagger\hat{Z}^\sigma_k\hat{Z}^\tau_l \hat{G}_t
\end{equation}
where 
\begin{equation}
    \hat{v}_t = \hat{G}_t^\dagger  \left(\frac{i}{2}\sum_{\substack{k\\\sigma}} \alpha^t_k  \hat{\gamma}_{k\sigma,0}\hat{\gamma}_{k\sigma,1} \right) \hat{G}_t,
\end{equation}
and the orbital rotation operator is defined for each $t$-leaf as, $\hat{G}_t =  \exp\left(\sum_{p,q,\sigma}[\log \mathbf{U}^{(t)}]_{pq} \hat{a}^{\dagger}_{p\sigma}\hat{a}_{q\sigma}  \right)$. The spectral bound for the two-body term is then given by,
\begin{equation}
    \tfrac{1}{2}\Delta_\mu^{\mathrm{2body}} \leq \tfrac{1}{4}\sum_t \Delta_\mu^{(v^2_t)}
\end{equation}
where 
\begin{align}
    \Delta_\mu^{(v^2_t)} &= (E_\mathrm{max}^{(\mu,v^2_t)} - E_\mathrm{min}^{(\mu,v^2_t)}) \\
    &\leq \| \hat{v}_t^2 \|_\mu 
    = \| \hat{v}_t \|_\mu^2.
\end{align}
Using the results from the previous subsection for the 1-body term, 
\begin{equation}
    \|\hat{v}_t \|_\mu =  \sum_\sigma \text{max}\left[ \Bigg|\frac{1}{2}\sum_{k=0}^{N_\mathrm{orb}-1}\alpha^t_k- \sum_{ \substack{k=0 } }^{\eta_\sigma-1} \alpha^t_k\Bigg| ,\Bigg| \frac{1}{2}\sum_{k=0}^{N_\mathrm{orb}-1} \alpha^t_k - \!\!\!\!\!\!\sum_{\substack{k=N_\mathrm{orb}-\eta_\sigma }}^{N_\mathrm{orb}-1} \!\!\!\! \alpha^t_k \Bigg|\right]  ,
\end{equation}
we find the following upper bound to the scaling behavior of the symmetry sector bound,
\begin{align}
    \tfrac{1}{2}\Delta_\mu^{\mathrm{2body}} \leq \tfrac{1}{4}\sum_t \|\hat{v}_t\|^2_\mu \leq \frac{1}{4}\sum_t \mathcal{O}\left(g(N_\mathrm{orb})\eta^2 \right),
\end{align}
in agreement with similar scaling arguments derived by McArdle \emph{et al.} \cite{mcardle2022exploiting}. This result shows that the symmetry sector bound will be strictly upper bounded by a function $g(N_\mathrm{orb})$ dependent on the number of spatial orbitals as well as the square of the total number of electrons, $\eta^2$. This result is highlighted in the analysis and numerical results section of the main manuscript. It is important to note that this upper bound can be quite loose, and a more accurate assessment of the scaling behavior is obtained by performing numerical calculations of the spectral bounds as discussed in the following section. 

\section{Numerical limitations of orbital optimization procedure}
While the orbital optimization method provides a tractable approach to compute symmetry-aware spectral bounds for large-scale systems, it does have important limitations and challenges which we discuss below. First, as alluded in the main manuscript, we found that the numerical procedure was prone to certain convergence issues arising from the highly non-convex landscape encountered in the orbital optimization procedure. In particular, we found that random initial guesses highly affected final converged energy values. To address this issue, we proposed the pre-orthogonalized one-electron and two-electron integrals in Eqs. \eqref{hij} and \eqref{gijkl}. We also found that initializing the $N_\mathrm{orb}(N_\mathrm{orb}-1)/2$ parameters to zero helped in obtaining reliable solutions for many cases. Nevertheless, this is not fool-proof and there is no guarantee that a global energy minimum will always be found. Second, and perhaps more importantly, it is crucial to emphasize that the proposed method is a single-determinant method which does not provide a complete description of multi-reference quantum states. As a result, it cannot accurately describe physical states such as open-shell singlets, which are important for the description of certain chemical ground-states. While this could, in principle, lead to a large deviation between $\Delta^{(\mathrm{HF})}$ and $\Delta^{(\mathrm{FCI})}$, it is not clear how large the deviation will be in practice. Below, we provide empirical evidence for the efficacy of the orbital optimization approach for various small-scale systems, comparing with FCI and DMRG-based numerical approaches. 

It is important to note that despite these two limitations, our method still adheres to the variational principle, ensuring that the estimated spectral range for each symmetry sector remains a lower bound to the exact FCI solution, $\Delta^{(\mathrm{HF})}_\mathrm{local\,min.} \leq \Delta^{(\mathrm{HF})}_\mathrm{global\,min.} \leq \Delta^{(\mathrm{FCI})}$ (see Figure \ref{fig:variational_methods}). This remains true regardless of whether the FCI extremal energies require a multi-configurational description, or whether the orbital optimization procedure converges to a bad local minima. In Figure \ref{fig:FCI_vs_HF_SI}, we present numerical results which validate this claim for the linear Hydrogen chain with the number of atoms ranging from 4 to 16 expanded in the minimal STO-6G basis. Table \ref{tab:results_sto6g} and \ref{tab:results_631g} similarly provide a comparison between the orbital-optimized spectral bounds with FCI-predicted spectral bounds for a small-molecule benchmark set expanded in the STO-3G and 6-31G basis. Table \ref{tab:results_DMRG} also provides a comparison between DMRG results and Hartree-Fock predictions for additional benchmark systems including stretched nitrogen, water, as well as the proposed (33e, 31o) active space for P450  \cite{goings2022reliably}.
\begin{figure}[t!]
    \subfloat[\label{fig:sub1}]{%
      \includegraphics[width=.45\linewidth]{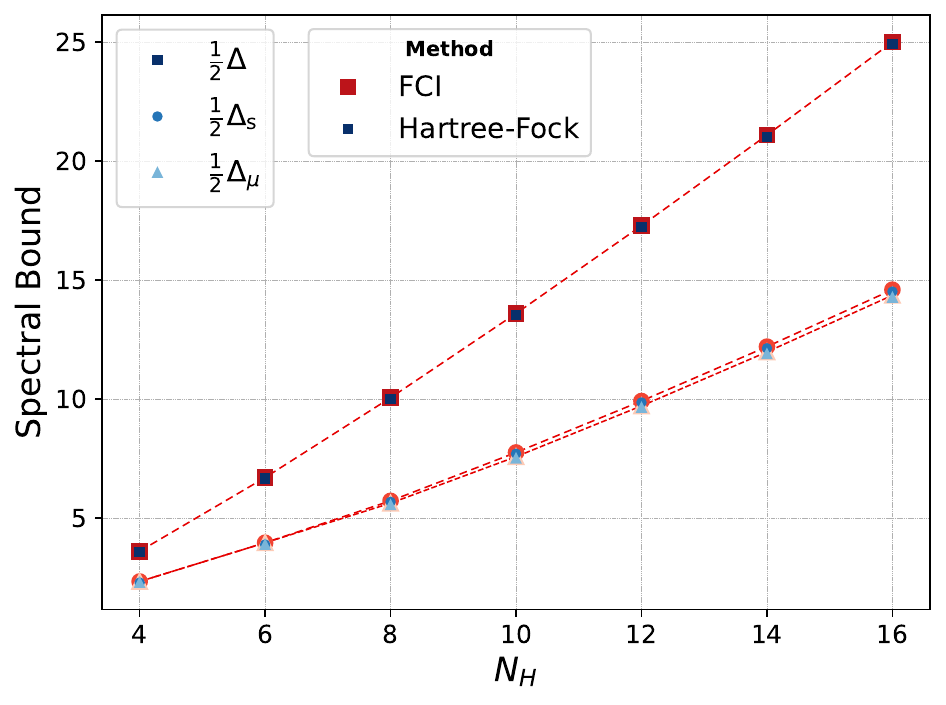}%
    }
    \subfloat[\label{fig:sub2}]{%
      \includegraphics[width=.45\linewidth]{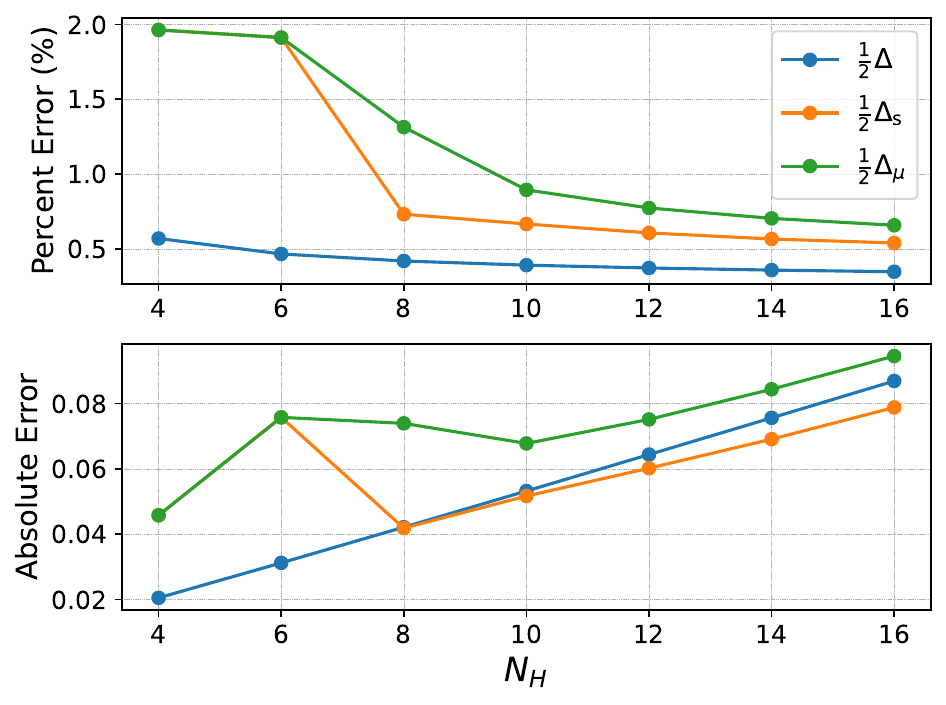}%
    }\hfill
    \caption{ (a) Numerical comparison between FCI spectral bounds (red markers) and orbital-optimized spectral bounds (blue markers) reported in units of Hartree. Symmetry-aware spectral bounds are plotted as a function of the number of hydrogen atoms, $N_H$. (b) Percent error and absolute error are also provided as a function of hydrogen atoms, $N_H$.}
    \label{fig:FCI_vs_HF_SI}
\end{figure}

\begin{table*}[h]
\centering
\begin{tabular}{|c|a|b|a|b|a|}
\hline
%%%%%%%%%%%%%%%%%%%%%%%%%%%%%%%%%%%%%%%%%%%%%%%%%%%%%%%%%%%%%%%%%%%%%%%%%%%%
System   & Bound & Total (incoh. HF) & Total (incoh. FCI)  & Total (coh. HF) & Total (coh. FCI)   \\ \hline
%%%%%%%%%%%%%%%%%%%%%%%%%%%%%%%%%%%%%%%%%%%%%%%%%%%%%%%%%%%%%%%%%%%%%%%%%%%%
\multirow{3}{*}{BeH$_2$} & $\!\!\Delta/2$ & 13.4 & 13.4 & 10.0 & 10.0 \\
                        & $\Delta_s/2$    & 8.3  & 8.4 & 7.2 & 7.3 \\
                        & $\Delta_\mu/2$  & 7.9  & 8.3 & 7.2 & 7.3 
\\ \hline \hline   
\multirow{3}{*}{H$_2$O} & $\!\!\Delta/2$ & 48.7 & 48.7 & 41.9 & 41.9 \\
                        & $\Delta_s/2$   & 29.6 & 30.0 & 28.9 & 28.9\\
                        & $\Delta_\mu/2$ & 29.4 & 29.9 & 23.7 & 23.7 
\\ \hline \hline    
%%%%%%%% LiH
\multirow{3}{*}{NH$_3$} & $\!\!\Delta/2$ & 38.5 & 38.5  & 33.8 & 33.8 \\
                    & $\Delta_s/2$       & 23.9 & 24.3   & 22.6 & 22.6 \\
                    & $\Delta_\mu/2$     & 23.3 & 23.9   & 19.4 & 19.5
\\ \hline   
\end{tabular}
\caption{Comparison between orbital-optimized (HF) and FCI spectral bounds in the minimal STO-6G basis reported in units of Hartree.}
\label{tab:results_sto6g}
\end{table*}

\begin{table*}[b]
\centering
\begin{tabular}{|c|a|b|a|b|a|}
\hline
%%%%%%%%%%%%%%%%%%%%%%%%%%%%%%%%%%%%%%%%%%%%%%%%%%%%%%%%%%%%%%%%%%%%%%%%%%%%
System   & Bound & Total (incoh. HF) & Total (incoh. FCI)  & Total (coh. HF) & Total (coh. FCI)   \\ \hline
%%%%%%%%%%%%%%%%%%%%%%%%%%%%%%%%%%%%%%%%%%%%%%%%%%%%%%%%%%%%%%%%%%%%%%%%%%%%
\multirow{3}{*}{BeH$_2$} & $\!\!\Delta/2$ & 42.1 & 42.1 & 32.6 & 32.6 \\
                        & $\Delta_s/2$    & 14.5 & 14.9 & 11.1 & 11.2 \\
                        & $\Delta_\mu/2$  & 13.0 & 13.4 & 9.3 & 9.3 
\\ \hline \hline   
\multirow{3}{*}{H$_2$O} & $\!\!\Delta/2$ & 57.4 & 57.4 & 42.4 & 42.5 \\
                        & $\Delta_s/2$   & 38.3 & 38.9 & 34.4 & 34.5 \\
                        & $\Delta_\mu/2$ & 37.0 & 38.0 & 34.3 & 34.4 
\\ \hline \hline    
%%%%%%%% LiH
\multirow{3}{*}{NH$_3$} & $\!\!\Delta/2$ & 65.2 & 65.2 & 43.5 & 43.6 \\
                    & $\Delta_s/2$       & 31.8 & 32.8 & 27.3 & 27.4 \\
                    & $\Delta_\mu/2$     & 30.6 & 31.6 & 27.3 & 27.4 
\\ \hline   
\end{tabular}
\caption{Comparison between orbital-optimized (HF) and FCI spectral bounds expanded in the 6-31G basis reported in units of Hartree.}
\label{tab:results_631g}
\end{table*}

\begin{table*}[h]
\centering
\begin{tabular}{|c|c|c|c|}
\hline
%%%%%%%%%%%%%%%%%%%%%%%%%%%%%%%%%%%%%%%%%%%%%%%%%%%%%%%%%%%%%%%%%%%%%%%%%%%%
System   & Spectral Bound & Total (coh. HF) & Total (coh. DMRG)   \\ \hline
%%%%%%%%%%%%%%%%%%%%%%%%%%%%%%%%%%%%%%%%%%%%%%%%%%%%%%%%%%%%%%%%%%%%%%%%%%%%
\multirow{2}{*}{P450 (33e,31o)} & \multirow{2}{*}{$\Delta_\mu/2$} & \multirow{2}{*}{38.4} & \multirow{2}{*}{38.5} \\
& & & 
\\ \hline \hline    
\multirow{2}{*}{N$_2$ [cc-pVDZ]} & \multirow{2}{*}{$\Delta_\mu/2$} & \multirow{2}{*}{55.6} & \multirow{2}{*}{55.9} \\
& & & 
\\ \hline \hline
\multirow{2}{*}{H$_2$O [cc-pVDZ]} & \multirow{2}{*}{$\Delta_\mu/2$} & \multirow{2}{*}{42.8} & \multirow{2}{*}{42.9} \\
& & & 
\\ \hline \hline    
 \multirow{2}{*}{H$_2$O [cc-pVTZ]} & \multirow{2}{*}{$\Delta_\mu/2$} & \multirow{2}{*}{63.44} & \multirow{2}{*}{63.95} \\
& & & 
\\ \hline 
\end{tabular}
\caption{Comparison between orbital-optimized (HF) and DMRG spectral bounds reported in units of Hartree.}
\label{tab:results_DMRG}
\end{table*}

\end{document}